\newtheorem{theorem}{Theorem}[section]
\newtheorem{proposition}[theorem]{Proposition}
\newtheorem{corollary}[theorem]{Corollary}
\newtheorem*{mainlemma}{Main Lemma}
\theoremstyle{definition}
\theoremstyle{remark}
\numberwithin{equation}{section}
\newcommand{\spmatrix}[1]{\left( \begin{smallmatrix}#1\end{smallmatrix}\right)} 
\DeclarePairedDelimiter\abs{\lvert}{\rvert}
\DeclarePairedDelimiter\norm{\lVert}{\rVert}
\newcommand\algzero{{\mathsf{0}}}
\newcommand\Id{{\text{\textup{Id}}}}
\newcommand\loc{{\text{\textup{loc}}}}
\newcommand\exter{{\textstyle\bigwedge}}
\newcommand\cdotaction{\mathord{\cdot}}
\newcommand{\dotomega}{\mathring{\omega}}
\newcommand{\dotA}{\mathring{A}}	
\newcommand{\dotF}{\mathring{F}}	
\newcommand{\dotW}{\mathring{W}}	
\newcommand{\dottau}{\mathring{\tau}}	
\newcommand{\dotvarphi}{\mathring{\varphi}}	
\newcommand\dd{\text{\textup{d}}}
\newcommand\upR{\text{\textup{R}}}
\newcommand{\homega}{\widehat{\omega}}
\newcommand{\hOmega}{\widehat{\Omega}}
\newcommand{\hphi}{\widehat{\phi}}
\newcommand{\hvarphi}{\widehat{\varphi}}
\newcommand{\hzeta}{\widehat{\zeta}}
\newcommand{\hGamma}{\widehat{\Gamma}}
\newcommand{\hA}{\widehat{A}}
\newcommand{\hF}{\widehat{F}}
\newcommand{\hR}{\widehat{R}}
\newcommand{\hD}{\widehat{D}}
\newcommand{\hcaD}{\widehat{\caD}}
\newcommand{\hd}{\widehat{\dd}} 
\newcommand{\hnabla}{{\widehat{\nabla}}}
\newcommand{\hg}{{\widehat{g}}}
\newcommand{\tu}{{\widetilde{u}}}
\newcommand{\tB}{{\widetilde{B}}}
\newcommand{\bGamma}{{\bar{\Gamma}}}
\newcommand{\bR}{{\bar{R}}}
\newcommand{\tF}{{\widetilde{F}}}
\newcommand{\ts}{{\widetilde{s}}}
\newcommand{\tvarphi}{{\widetilde{\varphi}}}
\newcommand{\teta}{{\widetilde{\eta}}}
\newcommand{\vc}{\vcentcolon =} 
\newcommand{\gt}{\rho_\caA}
\newcommand{\kg}{\mathfrak{g}}
\newcommand{\kh}{\mathfrak{h}}
\newcommand{\kj}{\mathfrak{j}}
\newcommand{\ku}{\mathfrak{u}}
\newcommand{\kk}{\mathfrak{k}}
\newcommand{\ksu}{\mathfrak{su}}
\newcommand{\kso}{\mathfrak{so}}
\newcommand{\kS}{\mathfrak{S}}
\newcommand\gR{{\mathbb R}}
\newcommand\gC{{\mathbb C}}
\newcommand\gZ{{\mathbb Z}}
\newcommand{\varE}{\mathcal{E}}
\newcommand{\varM}{\mathcal{M}}
\newcommand{\varP}{\mathcal{P}}
\newcommand{\varQ}{\mathcal{Q}}
\newcommand{\varU}{\mathcal{U}}
\newcommand{\caA}{\mathcal{A}}
\newcommand{\caF}{\mathcal{F}}
\newcommand{\caL}{\mathcal{L}}
\newcommand{\caG}{\mathcal{G}}
\newcommand{\caD}{\mathcal{D}}
\newcommand\lieA{{\mathbf{\mathsf{A}}}}
\newcommand\lieL{{\mathbf{\mathsf{L}}}}
\newcommand\sfX{{\mathsf X}}
\newcommand\sfR{{\mathsf R}}
\DeclareMathOperator{\Ad}{Ad}
\DeclareMathOperator{\End}{End}
\DeclareMathOperator{\diag}{diag}
\newcommand{\grast}{\bullet}
\newcommand{\auxfield}{auxiliary field\xspace}		
\newcommand{\auxfields}{auxiliary fields\xspace}		
\newcommand{\Auxfield}{Auxiliary field\xspace}		
\newcommand{\dressfield}{dressing field\xspace}		
\newcommand{\dressfields}{dressing fields\xspace}	
\newcommand{\Dressfield}{Dressing field\xspace}		
\newcommand{\resfield}{residual field\xspace}		
\newcommand{\resfields}{residual fields\xspace}		
\newcommand{\Resfield}{Residual field\xspace}		
\newcommand{\caseone}{case~1\xspace}
\newcommand{\caseonebis}{case~1b\xspace}
\newcommand{\Caseone}{Case~1\xspace}
\newcommand{\casetwo}{case~2\xspace}
\newcommand{\casetwobis}{case~2b\xspace}
\newcommand{\casethree}{case~3\xspace}
\begin{document}

\title{Gauge invariant composite fields out of connections,\\ with examples}
\author{C. Fournel, J. François, S. Lazzarini, T. Masson}
\date{}
\maketitle
\begin{center}
Aix Marseille Université, CNRS, CPT, UMR 7332, Case 907, 13288 Marseille, France\\
Université de Toulon, CNRS, CPT, UMR 7332, 83957 La Garde, France
\end{center}

\begin{abstract}
In this paper we put forward a systematic and unifying approach to construct gauge invariant composite fields out of connections. It relies on the existence in the theory of a group valued field with a prescribed gauge transformation. As an illustration, we detail some examples. Two of them are based on known results: the first one provides a reinterpretation of the symmetry breaking mechanism  of the electroweak part of the Standard Model of particle physics; the second one is an application to Einstein's theory of gravity described as a gauge theory in terms of Cartan connections. The last example depicts a new situation: starting with a gauge field theory on Atiyah Lie algebroids, the gauge invariant composite fields describe massive vector fields. Some mathematical and physical discussions illustrate and highlight the relevance  and the generality of this approach.
\end{abstract}

\section{Introduction and motivations}
\label{sec-introduction}

In gauge field theories, gauge symmetries are redundant inner degrees of freedom which have to be managed, for instance, to define gauge invariant observables, or to proceed to the quantization of these theories. A distinguished method to deal with this problem consists in performing a reduction of symmetries. There are three familiar procedures to do so, which answer different issues in gauge field theories.

Firstly, gauge fixing is a technique which allows to simplify field equations by avoiding physically equivalent field configurations. This point is of particular importance for the quantization of field theories in order to mod out the volume of the gauge group in the functional integral. To perform this procedure, one selects a single representative in the gauge orbit of the fields by adding a constraint equation either in the functional measure or directly in the Lagrangian. This additional term prevents the action from being gauge invariant anymore, so that the symmetry is \textsl{de facto} reduced. 

Secondly, the spontaneous symmetry breaking mechanism has been devised in \cite{EnglBrou64a, Higg64a, GuraHageKibb64a}. This procedure requires an external scalar field coupled to the gauge fields of the theory. At certain values of a parameter in the theory, the scalar field is spontaneously polarized in a direction which minimizes its potential. By doing so in the electroweak part of the Standard Model, the system undergoes a ``phase transition'' from a massless theory to a theory with massive vector bosons, $W_\mu^\pm$ and $Z_\mu$, mediating the weak interaction. The symmetry group is broken into a residual subgroup which, as a requirement of this mechanism, leaves invariant  the vacuum configuration.

Thirdly, the procedure of reduction of principal fiber bundles results from a mathematical theorem in the theory of  fiber bundles \cite{KobaNomi96a}: a $G$-principal fiber bundle $\varP$ is reduced to a $H$-principal fiber bundle $\varQ$, where $H \subset G$ is a Lie subgroup, if and only if there exists a global section of the $G/H$-bundle $\varP/H$. For instance, the metric $g$, in general relativity, induces a reduction of the structure group $GL^+_4(\gR)$ to its subgroup $SO(1,3)$.

The procedure described in this paper implements a reduction of symmetry, but it does not belong to any of the latter sorts. It can be applied to any gauge theory equipped with a group valued field with a specific action of the gauge group, which we refer to as a ``\dressfield''.  It is a mathematical procedure in the sense that it does not depend on either a parameter in the theory or a convenient constraint equation. It consists in constructing gauge invariant composite fields out of connections and \dressfields: this is merely an appropriate change of variables in the functional spaces of fields. The action of  the gauge group is ``neutralized'' in the sense that it becomes trivial on these composite fields. Some examples, usually described within distinct frameworks, are treated here in the general structure explicitly depicted in the ``Main Lemma'' of section~\ref{sec-general-scheme}. In practice, the \dressfield comes from what will be called an ``\auxfield'' in the gauge theory, whose Lagrangian, once written in terms of the new fields, depends only on the gauge invariant composite fields, and on what will be called ``\resfields'', and not anymore on the \dressfield.

\smallskip
In this paper, a gauge theory is a field theory supporting the action of the gauge group $\caG$ of a principal fiber bundle $\varP$, which is the space of vertical automorphisms of $\varP$. Let us stress the difference between \emph{active} and \emph{passive} gauge transformations. The former correspond to transformations of the fields by elements of $\caG$, usually implemented through the geometric action of $\caG$ on $\varP$. The latter correspond to changes of  local trivializations of $\varP$. In most gauge field theories, the mathematical expressions of both active and passive transformations look the same. However, we present in section~\ref{sec-application-to-a-yang-mill-higgs-model-based-on-atiyah-lie-algebroids} an example where this situation is no longer true.

Gauge symmetries rely on inner degrees of freedom, and will be distinguished from geometrical symmetries induced by changes of coordinate systems or action of diffeomorphisms on a manifold. This distinction makes sense in relation to the notion of observables in physics. Namely, using the present meaning of gauge transformations, only gauge invariant quantities can be observed, but there are lots of coordinate dependent quantities which can be observed: position, momentum, electromagnetic currents, electric and magnetic fields, \textsl{etc}. This clear separation between inner and geometric degrees of freedom is also apparent from a mathematical point of view.  Geometrical structures are related to the theory of natural fibre bundles (see \textsl{e.g.} \cite[Section~14]{KolaMichSlov93a}), where they are naturally associated to a smooth manifold and its diffeomorphism group. On the contrary, gauge field theories require extra structures in the form of a structure group,  a principal fiber bundle, and some connections… Accordingly, we consider a linear connection on a manifold as a natural geometrical object, defined on the natural fiber bundles of tensor fields.

The scheme put forward in this paper brings out a procedure of \emph{geometrization}, in that it transforms fields defined in a gauge theory, on a principal fiber bundle or on an associated bundle, into fields defined in terms of the natural geometry of the base manifold. In other words, it gets rid of some of the extra structures required by a gauge field theory to the benefit of some geometrical objects, which turn out to enter in the construction of physical observables. Nevertheless, it is not always possible to perform a full geometrization of a gauge theory.

\smallskip
In section~\ref{sec-general-scheme}, we present the scheme which leads to the construction of gauge invariant composite fields out of a connection. The technical component is summarized in a lemma, that we refer to as the ``Main Lemma'' of this paper. Very simple illustrative examples are proposed. A general mathematical discussion on the relevance of the actions of the gauge group, as well as on the geometry of the \dressfield, which is an essential ingredient in the scheme. 

In section~\ref{sec-applications-to-gauge-theories-in-particles-physics},  we illustrate more substantially our scheme through the electroweak part of the Standard Model of particle physics. The \dressfield is extracted from the scalar field used to perform the symmetry breaking mechanism in the usual point of view. The composite fields $W_\mu^\pm$, $Z_\mu$ and $A_\mu$ are exactly the fields of the bosons of the Standard Model after symmetry breaking. An extension of this method to the case $SU(N)$ is also explored for $N>2$: a direct application of the Main Lemma is no longer possible, still, a reduction from $SU(N)$ to $SU(N-1)$ is described.

In section~\ref{sec-application-to-general-relativity-as-a-gauge-theory}, we consider general relativity (GR). Its original formulation by Einstein is given in terms of geometric structures only. It is also possible to consider GR as a gauge theory, but the Lagrangian is not of Yang-Mills type (see \textsl{e.g.} \cite{Trau79a}), and the soldering form plays an important role to recover the geometric theory out of the gauge theory. Here, we use a Cartan connection on a $SO(1,m-1)$-principal fiber bundle. The \dressfield is extracted from the ``translational'' part of the connexion, and plays the role of a vielbein. The gauge invariant composite fields behave exactly as Christoffel symbols.

In Section~\ref{sec-application-to-a-yang-mill-higgs-model-based-on-atiyah-lie-algebroids}, we develop a completely new example which is so far a toy model. It is constructed using the general mathematical framework of gauge theories on transitive Lie algebroids which has been developed recently in  \cite{FourLazzMass13a}. In this specific example, we consider generalized connections on the Atiyah Lie algebroid of a $G$-principal fiber bundle $\varP$ as a natural background for Yang-Mills-Higgs theories, where the action of the gauge group is not the usual geometric action. To construct this model, we consider only a subspace of generalized connections for which a \dressfield is automatically provided. From the application of the Main Lemma, it follows that the gauge invariant composite fields are massive vector fields.

In section~\ref{sec-comments}, we use our unifying scheme to make comparisons between the fields involved in the three main examples, see Table~\ref{table-before-MainLemma} and Table~\ref{table-after-MainLemma}, and we discuss similarities and differences with some other usual constructions. This gives a better positioning of this pretty appealing scheme in the landscape.


\section{The scheme and first illustrative examples}
\label{sec-general-scheme}

Let us consider a gauge field theory, with structure Lie group $H$, with Lie algebra $\kh$. Let $\omega$ be a connection, that is, a $1$-form on a smooth manifold with values in $\kh$ which varies under a gauge transformation $\gamma$ (function with values in $H$), as $\omega^\gamma = \gamma^{-1} \omega \gamma + \gamma^{-1} \dd \gamma$.

\begin{mainlemma}
Suppose that there exists a field $u$ with values in a Lie group $G$ containing $H$, such that under a gauge transformation $\gamma$ one has $u^\gamma = \gamma^{-1} u$. Then the composite field
\begin{equation}
\label{eq-compositefield}
\homega \vc u^{-1} \omega u + u^{-1} \dd u,
\end{equation}
if it is mathematically defined, is $H$-gauge invariant.

If $\phi$ is a vector field in a representation $\ell$ of $H$, which varies under a gauge transformation $\gamma$ as $\phi^\gamma = \ell(\gamma^{-1}) \phi$, then the composite field $\hphi \vc \ell(u^{-1}) \phi$, if it is mathematically defined, is $H$-gauge invariant.

Let $\Omega=\dd \omega + \frac{1}{2} [\omega,\omega]$ be the curvature of $\omega$. Then the composite field $\hOmega \vc u^{-1}\Omega u$, if it is mathematically defined, is gauge invariant, and one has $\hOmega = \dd \homega + \frac{1}{2} [\homega,\homega]$.

Let $\caD \phi \vc \dd \phi + \ell_*(\omega)\phi$ be the covariant derivative of $\phi$ as above associated to $\omega$. Then the composite field $\widehat{\caD \phi} \vc \ell(u^{-1}) \caD \phi$, if it is mathematically defined, is gauge invariant, and, defining $\widehat{\caD \phi} \vc \hcaD \hphi$, one has $\hcaD = \ell(u^{-1}) \caD \ell(u) = \dd + \ell_*(\homega)$.
\end{mainlemma}

\begin{proof}
One has $\homega^\gamma \vc (u^\gamma)^{-1} \omega^\gamma u^\gamma + (u^\gamma)^{-1} \dd u^\gamma = (u^{-1} \gamma) (\gamma^{-1} \omega \gamma + \gamma^{-1} \dd \gamma) (\gamma^{-1} u) + u^{-1} \gamma \dd (\gamma^{-1} u) = u^{-1} \omega u + u^{-1} \dd u$. A similar computation holds for $\hphi$, $\hOmega$, and $\widehat{\caD \phi}$, and the explicit formulas of $\hOmega$ and $\hcaD$ are obtained straightforwardly.
\end{proof}

In the paper we will refer to the field $u$ of the Main Lemma as the \emph{\dressfield}. The mention ``if it is mathematically defined'' means that the composite field should be defined without ambiguity in an identified space of fields. Examples will be clear enough to illustrate this point. The lemma is not precise concerning the space of connections $\omega$. Indeed, it works for descriptions of connections either as $1$-forms on a $H$-principal  fiber bundle $\varP$ over a manifold $\varM$, or as local $1$-forms over an open subset $\varU \subset \varM$ which trivializes $\varP$. In the same way, the field $\phi$ could be an equivariant field on $\varP$ or a local trivialization of a field on $\varU$. This fuzziness is convenient to apply this result to many examples, possibly with minor adjustments, and to look at quite different situations in an unifying point of view. The differential operator $\dd$ can be more general than the de~Rham differential, provided the corresponding terms are mathematically well-defined. The third example of this paper (see section~\ref{sec-application-to-a-yang-mill-higgs-model-based-on-atiyah-lie-algebroids}) uses such a differential calculus extending the de~Rham differential calculus. 

\medskip
As a first illustrative application, consider the following prototype Stueckelberg Lagrangian~\cite{RuegRuiz04a}
\begin{equation*}
\caL[A_\mu, B]= -\tfrac{1}{4}  F_{\mu\nu}F^{\mu\nu} + \tfrac{m^2}{2} \left(A_\mu -\tfrac{1}{m}\partial_\mu B\right)\left( A^\mu -\tfrac{1}{m}\partial^\mu B \right),
\end{equation*}
where $F_{\mu\nu } = \partial_\mu A_\nu - \partial_\nu A_\mu$ is the field strength associated to the $U(1)$-connection $A_\mu$,  $B$ is the Stueckelberg field with values in the Lie algebra $\ku(1)$ and $m$ is a constant parameter. This Lagrangian is invariant under the infinitesimal gauge transformations $\delta A_\mu=-\partial_\mu \Lambda$ and $\delta B=-m\Lambda$ for any $\ku(1)$-valued function $\Lambda$. Consider the $U(1)$-valued \dressfield $u=e^{ i B/m}$ which transforms as $u^\gamma = e^{i (B-m\Lambda)/m}=\gamma^{-1}u$ where $\gamma=e^{i\Lambda}$ is an element of the $U(1)$-gauge group. Applying the Main Lemma, the gauge invariant composite fields are  $\hA_\mu=A_\mu + i u^{-1}\partial_\mu u=A_\mu - \tfrac{1}{m}\partial_\mu B$ and $\hF_{\mu\nu} = F_{\mu\nu}$. Then, the Stueckelberg Lagrangian reduces to
 \begin{equation*}
\mathcal{L}[\hA_\mu]= -\tfrac{1}{4}  \hF_{\mu\nu}\hF^{\mu\nu} + \tfrac{m^2}{2} \hA_\mu\hA^\mu.
\end{equation*}
This is a Proca-like Lagrangian describing a gauge invariant massive vector field $\hA_\mu$. Schematically, we shifted a $U(1)$-gauge theory with fields $(A_\mu, B)$ to a theory built on the gauge invariant composite fields $\hA_\mu$ in which the $U(1)$ symmetry has been factorized out.

Let us recall that the so-called Stueckelberg trick consists precisely in the opposite shift, that is to implement a $U(1)$-gauge symmetry on the Proca Lagrangian at the expense of introducing a new field $B$. The degree of freedom of the field $B$ is introduced in the Lagrangian in order to exactly compensate for the enlargement to $U(1)$-gauge symmetry. On the contrary, in our scheme, the degrees of freedom of the \dressfield are absorbed in order to reduce the size of the gauge group.


The scheme that will be repeatedly used in the next sections consists in identifying the \dressfields as a part of some \auxfields given from the very beginning for free in the gauge theory at hand. In the Stueckelberg Lagrangian, the \dressfield and the \auxfield coincide. To give a less trivial example of this scheme, consider the abelian Higgs model
\begin{equation*}
\caL[A_\mu,\varphi] = \left[ (\partial_\mu -i A_\mu) \varphi\right]^\dagger \left[ (\partial_\mu -i A_\mu) \varphi \right] - V(\varphi) - \tfrac{1}{4}  F_{\mu\nu}F^{\mu\nu}
\end{equation*}
where $\varphi$ is a $\gC$-valued scalar field, and $V(\varphi)= \mu^2 \varphi^\dagger \varphi + \lambda (\varphi^\dagger \varphi)^2$. This Lagrangian is invariant under the (finite) gauge transformations $A_\mu^\gamma = A_\mu + i \gamma^{-1} \partial_\mu \gamma = A_\mu - \partial_\mu \alpha$ and $\varphi^\gamma = \gamma^{-1} \varphi = e^{-i \alpha} \varphi$ for $\gamma = e^{i \alpha}$. The \dressfield $u$ is identified from the \auxfield $\varphi \neq 0$ by the polar decomposition $\varphi = \rho u$, where $\rho= |\varphi|$. According to $\varphi^\gamma = \rho^\gamma u^\gamma = \gamma^{-1} \rho u$, the $U(1)$-valued field $u$ transforms as $u^\gamma =  \gamma ^{-1} u$ and $\rho$ is invariant. Applying the Main Lemma, the composite fields $\hA_\mu = A_\mu + i u^{-1}\partial_\mu u$ and $\widehat \varphi = u^{-1}\varphi = \rho$ are gauge-invariant fields, $\hF_{\mu\nu}= F_{\mu\nu}$ and $V(\varphi)=V(\rho)$. The Lagrangian can be rewritten as:
\begin{equation*}
\caL[\hA_\mu,\rho] = \left[ (\partial_\mu - i \hA_\mu) \rho \right]^\dagger \left[ (\partial^\mu - i \hA^\mu) \rho \right] - V(\rho) - \tfrac{1}{4}  \hF_{\mu\nu}\hF^{\mu\nu}
\end{equation*}
This theory describes a (massless) vector boson, coupled with a $\gR^+$-valued scalar residual field~$\rho$, embedded in a potential term. Here, the representation of the gauge group is trivial on every fields of the theory. The gauge-invariant composite fields $\hA_\mu$ are constructed using  fields already present in the Lagrangian, so that the change of variables $(A_\mu,\phi) \mapsto (\hA_\mu,\rho)$ is merely a convenient redistribution of the degrees of freedom of the original theory.

\medskip
The idea to construct gauge invariant fields by ``dressing'' the fields of the theory as in the Main Lemma, takes its root in  \cite{Dira55a} for QED (see also \cite{Dira58a}), and has been used in \cite{LaveMcMu97a} for QCD, where the terminology ``dressing  field'' is explicitly introduced. Relation with our scheme is postponed to \ref{sec-about-the-dressing-field}. But let us mention right now that the dressing fields exhibited in these papers are non-local with respect to the gauge field variables. In our forthcoming three main examples, the dressing fields are local, so that the composite fields are also local.

\medskip
Let us now make some comments about the Main Lemma. A gauge field theory requires that the fields belong to $\caG$-spaces, that is, spaces carrying specified actions of the gauge group of the theory. A space of fields $\caF$ can carry two different actions $\rho$ and $\rho'$ of the gauge group, but a field cannot belong to the two different $\caG$-spaces $(\caF, \rho)$ and $(\caF, \rho')$ at the same time. 

Let us recall the usual geometrical setting for the action of the gauge group in gauge field theories, in terms of principal fiber bundles and connections \cite{Ster94a}. Let $\varP$ be a $H$-principal  fiber bundle. A gauge transformation identifies with an equivariant map $\gamma : \varP \rightarrow H$ such that $\gamma(ph) = h^{-1} \gamma(p) h$ for any $p\in \varP$ and $h \in H$, and it defines a vertical automorphism of principal fiber bundle $\Psi : \varP \rightarrow \varP$, defined by $\Psi(p) \vc p \gamma(p)$. We denote by $\caG$ the gauge group of $\varP$. This group acts (on the right) on various spaces by a \emph{geometric} action induced as the pull-back by vertical automorphisms. Let us produce some well-known examples. Identifying a connection as a $1$-form $\omega \in \Omega^1(\varP) \otimes \kg$, this geometrical action gives rise to the usual formula $\omega \mapsto \omega^\gamma \vc \Psi^\ast \omega = \gamma^{-1} \omega \gamma + \gamma^{-1} \dd \gamma$. We denote by $(\caA, \gt)$ the $\caG$-space of connections carrying this action of $\caG$. As a second example, let $E \vc \varP \times_\ell F$ be an associated fiber bundle, where $\ell$ is a left action of $H$ on the fiber $F$. It is known that a section $s \in \Gamma(E)$ identifies with an equivariant map $\ts : \varP \rightarrow F$, such that $\ts(ph) = \ell_{h^{-1}} \ts(p)$. Then $\gamma \in \caG$ acts on $\Gamma(E)$ by the pull-back $\ts \mapsto \ts^\gamma \vc \Psi^\ast \ts$, so that $\ts^\gamma(p) = \ell_{\gamma(p)^{-1}} \ts(p)$. The geometric action of $\caG$ on any associated fiber bundle $E$ as above, is structurally written in terms of the action $\ell$. 

Until now, the space $\caG$ has been considered as the space of symmetries of the theory, \textsl{i.e.} $\caG$ is a group acting on fields. In gauge field theories, it is also possible to consider $\caG$ as a space of fields, which then requires the specification of an action of the gauge group $\caG$. The distinction between $\gamma \in \caG$, group element, and $u \in \caG$, field, stems from physical arguments. Two actions can be naturally defined on the space $\caG$. One has $\caG = \Gamma(\varP\times_{\alpha} H)$ where $\alpha_h(h') = h h' h^{-1}$ for any $h,h' \in H$. From this we deduce that the \emph{geometric} action of $\gamma \in \caG$ on any $u \in \Gamma(\varP\times_{\alpha} H)$ is given by $u^\gamma = \gamma^{-1} u \gamma$. We denote by $(\caG, \alpha)$ the $\caG$-space of fields $\caG$ carrying this action $\alpha$ of $\caG$. Another action of $\gamma \in \caG$ (gauge group) on $u \in \caG$ (space of fields) is given by $u^\gamma \vc \gamma^{-1} u$, which uses the product in the group. We denote by $(\caG, \sfR)$ the $\caG$-space of fields $\caG$ carrying this \emph{algebraic} action. The product in $\caG$ (space of fields) is compatible with $\alpha$ but not with $\sfR$.


It is common in gauge field theories to combine several fields living in different $\caG$-spaces into \emph{composite} fields. The action of $\caG$ on these composite fields is computed by combining the gauge transformed individual fields. For instance, the combination $\caD \phi \vc \dd \phi + \ell_*(\omega)\phi$ is such a composite field, whose gauge transformation (action of $\caG$) is given by $(\caD \phi)^\gamma \vc \dd \phi^\gamma + \ell_*(\omega^\gamma)\phi^\gamma$. In physics, the gauge principle requires that $(\caD \phi)^\gamma = \ell_{\gamma^{-1}} \caD \phi$, which is automatic in the present geometric setting. In general, any composite fields built on fields carrying geometrical actions of $\caG$ carries in turn a geometrical action. As a second example, consider $u \in (\caG, \alpha)$ (space of fields) and $\omega \in (\caA, \gt)$ and the composite field $\omega^u \vc u^{-1} \omega u + u^{-1} \dd u$ (which is not a gauge transformation: $u$ is a field!). Then, for any $\gamma \in \caG$ (gauge group), one has $(\omega^u)^\gamma \vc (\gamma^{-1} \omega \gamma + \gamma^{-1} \dd \gamma)^{\gamma^{-1} u \gamma} = \gamma^{-1} \omega^u \gamma + \gamma^{-1} \dd \gamma$, which shows that $\omega^u \in (\caA, \gt)$. This defines a map of $\caG$-spaces
\begin{equation}
\label{eq-action-G-spaces-one}
(\caG, \alpha) \times (\caA, \gt) \to (\caA, \gt).
\end{equation}
The usual relation $(\omega^{\gamma_1})^{\gamma_2} = \omega^{\gamma_1 \gamma_2}$ expresses the right action of $\caG$ (gauge group) on $\caA$ for $\gamma_1, \gamma_2 \in \caG$. Here, $\omega^{\gamma_1}$ is not a composite field since $\gamma_1$ is not a field.

Suppose now that we want to apply the Main Lemma in the situation when $u \in \caG = \Gamma(\varP\times_{\alpha} H)$. Then the action of $\caG$ required for a \dressfield, $u \mapsto \gamma^{-1} u$, is not the geometric action $\alpha$ carried by $\caG$: it is the algebraic action $\sfR$. This implies that the three hypotheses 
\begin{enumerate*}[label=(\roman*),itemjoin*={{ and }}]
\item\label{list-hyp1} geometric action of the gauge group,
\item\label{list-hyp2} $u$ a field in the space $\caG = \Gamma(\varP\times_{\alpha} H)$ (without specifying any action),
\item\label{list-hyp3} $u$ a \dressfield (\textsl{i.e.} $u$ transforms as $u \mapsto \gamma^{-1}u$),
\end{enumerate*}
can not hold at the same time. Indeed, \ref{list-hyp1}  $+$ \ref{list-hyp2} leads to $u \in (\caG, \alpha)$, while \ref{list-hyp2}  $+$ \ref{list-hyp3} leads to $u \in (\caG, \sfR)$. In this particular situation, the Main Lemma shows that \ref{list-hyp2}  $+$ \ref{list-hyp3} defines a map of $\caG$-spaces
\begin{equation}
\label{eq-action-G-spaces-two}
(\caG, \sfR) \times (\caA, \gt) \to (\caA, \Id).
\end{equation}
where $\Id$ is the trivial action of $\caG$ on $\caA$. In general, the Main Lemma would produce a composite fields $\homega$ outside the initial space of fields $\caA$ (see \ref{sec-an-alternative-dressing-field-particles} and \ref{sec-application-of-the-main-lemma-gravity}). But here, the assumption $u \in (\caG, \sfR)$ implies that $\homega \in \caA$, with the representation $\Id$. Had we forgotten the $\caG$-actions on each of the spaces, \eqref{eq-action-G-spaces-one} and \eqref{eq-action-G-spaces-two} would have been reduced to the same map of spaces (and no more of $\caG$-spaces): $\caG \times \caA \to \caA$. This would have brought some confusion on the true nature of the involved objects.

The hypothesis \ref{list-hyp1} looks imperative in the ordinary geometric setting of gauge field theories, but the third example in section~\ref{sec-application-to-a-yang-mill-higgs-model-based-on-atiyah-lie-algebroids} shows that this hypothesis can be bypassed.

\smallskip
Several arguments can be used to emphasize that relation \eqref{eq-compositefield} is not a gauge transformation, the first one being that \eqref{eq-compositefield} defines a composite field. The second one being that the \dressfield $u$ may not even be an element in the space $\caG$  (see examples in sections~\ref{sec-applications-to-gauge-theories-in-particles-physics} and \ref{sec-application-to-general-relativity-as-a-gauge-theory}), and the gauge invariant composite fields $\homega$ and $\hOmega$ may not be $\kh$-valued anymore, for instance when $G \neq H$ (see section~\ref{sec-application-to-general-relativity-as-a-gauge-theory}). In the same way, if $\phi$ is a section of an associated vector bundle to $\varP$, then $\hphi$ and $\hcaD\hphi$ needs not be a section of this vector bundle anymore.

Retaining hypotheses \ref{list-hyp1} (natural hypothesis in the geometric setting of gauge field theories) and \ref{list-hyp3}, we can conclude two facts. Firstly, \eqref{eq-compositefield} is definitively not a gauge transformation, because \ref{list-hyp2} is false, \textsl{i.e.} $u$ is not an element of the space $\caG$. Secondly, when $u$ takes its values in $H$, the action $u \mapsto \gamma^{-1} u$  being geometric by \ref{list-hyp1}, $u \in \Gamma(\varP \times_L H) \simeq \Gamma(\varP)$ with $L_h (h') = hh'$. It is known that such a global section exists if, and only if, $\varP \simeq \varM \times H$, so that the global existence of a \dressfield satisfying \ref{list-hyp1} and \ref{list-hyp3} implies strong requirements on the topology of $\varP$. We now elaborate on this specific point. 

Let $\varP$ be a $H$-principal fiber bundle on $\varM$, and let $K$ be a Lie subgroup of $H$ with Lie algebra $\kk \subset \kh$.
\begin{proposition}
\label{prop-S-function-topology}
There exists a map $S : \varP \rightarrow K$, such that $S(pk) = k^{-1} S(p)$ for any $p \in \varP$ and $k \in K$ if, and only if, there is an isomorphism of $K$-spaces $\varP \simeq \varP/K \times K$ where the (right) action of $K$ on $\varP/K \times K$ concerns only the $K$ factor.
\end{proposition}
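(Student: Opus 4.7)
The statement amounts to the classical fact that the principal $K$-bundle $\pi : \varP \to \varP/K$ (where $K$ acts freely on $\varP$ by restriction of the $H$-action) is trivial if and only if it admits a global section; the equivariant map $S$ encodes exactly such a section. The plan is to exhibit the two mutually inverse constructions explicitly.

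For the direction ``\emph{existence of $S$} $\Rightarrow$ \emph{trivialization}'', given $S$ I would set $\Phi(p) \vc ([p], S(p)^{-1}) \in \varP/K \times K$. The equivariance $S(pk) = k^{-1} S(p)$ yields $\Phi(pk) = ([p], S(p)^{-1} k)$, which is precisely $\Phi(p) \cdot k$ for the right action $(x,g) \cdot k \vc (x, gk)$ on $\varP/K \times K$. Bijectivity is immediate: the inverse sends $([p], g)$ to $p \cdot S(p) g \in \varP$, a formula independent of the chosen representative $p$ thanks to the equivariance of $S$. Equivalently, the $K$-invariant map $p \mapsto p \cdot S(p)$ descends to a global section $\sigma : \varP/K \to \varP$ of $\pi$, and $\Phi$ is then the associated trivialization.

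For the converse, given a $K$-space isomorphism $\Phi: \varP \to \varP/K \times K$, I write $\Phi(p) = (\alpha(p), \chi(p))$. Since the action of $K$ on the target is trivial on the $\varP/K$ factor, $K$-equivariance of $\Phi$ forces $\alpha(pk) = \alpha(p)$ and $\chi(pk) = \chi(p) k$. Setting $S(p) \vc \chi(p)^{-1}$ then gives exactly $S(pk) = k^{-1} S(p)$. The only real subtlety is bookkeeping of side conventions (left versus right, placement of the inverse) so that the right action on $\varP/K \times K$ matches the rule $S(pk) = k^{-1} S(p)$; this is what forces the appearance of $S(p)^{-1}$ rather than $S(p)$ in the formula for $\Phi$, and is the main hazard in what is otherwise a routine verification, with smoothness of $S$ and $\Phi$ following from each other since both formulas only involve $\pi$ and the group multiplication.
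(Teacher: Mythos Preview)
Your proof is correct and follows essentially the same approach as the paper: both build the trivialization from $S$ via $p \mapsto (\text{class of } p,\, S(p)^{-1})$ with inverse $(\text{class},g)\mapsto p\,S(p)\,g$, and recover $S$ from a trivialization by inverting the $K$-component. The only cosmetic difference is that the paper first realizes $\varP/K$ concretely as the subspace $\varQ \vc S^{-1}(\{e\})\subset \varP$ (a choice that pays off in the subsequent corollary), whereas you work directly with the abstract quotient.
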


\begin{proof}
If there is an isomorphism $\varP \xrightarrow{\simeq} \varP/K \times K$  of $K$-spaces, with $p \mapsto ([p]_K, k)$, then the map $S(p) = k^{-1}$ satisfies the requirements.

Suppose now that the map $S$ exists. Then $S$ is onto: for any $k \in K$, consider any $p_0 \in \varP$, then $S(p_0 S(p_0) k^{-1}) = k$. We can then define the non empty space $\varQ \vc S^{-1}(\{e\}) \subset \varP$, where $e$ is the unit in $K$. Then the map $\varP \rightarrow \varQ \times K$, defined by $p \mapsto (p S(p), S(p)^{-1})$, is a $K$-equivariant isomorphism, whose inverse is $\varQ \times K \ni (q,k) \mapsto qk \in \varP$.

The map $\varP \rightarrow \varQ$, defined by $p \mapsto p S(p)$, factorizes through the quotient $\varP \rightarrow \varP/K$, $p \mapsto [p]_K$, so that one has a map which associates to $[p]_K \in \varP/K$ the element $p S(p) \in \varQ$. This is an isomorphism, with inverse $q \mapsto [q]_K$ for any $q \in \varQ \subset \varP$.
\end{proof}

Notice that the map $S$ permits to realize the quotient $\varP/K$ as the subspace $\varQ$ of $\varP$. When $K = H$,  $S$ defines a global section of $\varP$, and one has $\varP/H = \varM$, so that $\varP \simeq \varM \times H$ as expected. In general, the proposition tells us that the existence of $S$ implies that $\varP$ is ``trivial in the $K$ direction''. 

Let $\omega$ be a connection $1$-form on $\varP$, and let $\homega \vc S^{-1} \omega S + S^{-1} \dd S$ be the composite field of the Main Lemma for the \dressfield $S$.  Define $f_S : \varP \rightarrow \varP$ by $f_S(p) = pS(p)$ for any $p\in \varP$. Then it is straightforward to show that $\homega = f_S^\ast \omega$ and $\homega$ is $K$-invariant and $K$-horizontal, so that $\homega$ defines a natural $1$-form on $\varP/K$. In the case $K=H$, we have a ``full geometrization'' of $\omega$ into the global $\kh$-valued $1$-form $\homega$ on $\varM$.

\begin{corollary}
\label{cor-case-J-times-K}
Suppose that a map $S$ as in Prop.~\ref{prop-S-function-topology} exists and that $H = J \times K$, with Lie algebra $\kh = \kj \oplus \kk$. Then $\varQ \vc S^{-1}(\{e\}) \subset \varP$  is a $J$-principal fiber bundle, and $\homega \vc S^{-1} \omega S + S^{-1} \dd S$ defines a $1$-form $\homega_\varQ = \omega_\varQ^\kj \oplus \homega_\varQ^\kk$ on $\varQ$, where $\omega_\varQ^\kj$ is a $J$-connection and $\homega_\varQ^\kk$ is a $\kk$-valued $K$-gauge invariant $1$-form.
\end{corollary}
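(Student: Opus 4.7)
The plan is first to upgrade $\varQ$ to a $J$-principal fiber bundle, then to restrict the dressed connection $\homega$ to $\varQ$ and split according to $\kh = \kj \oplus \kk$, and finally to verify the connection and invariance properties by direct computation using the commutation of $J$ and $K$ inside $H$.

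For the bundle structure, I would first note that the hypothesis $H = J \times K$ is naturally accompanied by the requirement that $S$ be $J$-invariant, $S(pj) = S(p)$ for $j\in J$, in addition to its $K$-equivariance; equivalently, $S$ is $H$-equivariant for the $H$-action on $K$ factoring through the projection $H \to K$. Granting this, for $q \in \varQ$ and $j \in J$ one has $S(qj) = S(q) = e$, so $\varQ$ is preserved by the $J$-action on $\varP$. The $J$-action on $\varQ$ is free (as a restriction of the free $H$-action on $\varP$), and the isomorphism $\varP \simeq \varQ \times K$ of Prop.~\ref{prop-S-function-topology} turns local trivializations $\varU \times H = \varU \times J \times K$ of $\varP$ into local trivializations $\varU \times J$ of $\varQ$ at the slice $K = \{e\}$. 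This gives $\varQ \to \varM$ the structure of a $J$-principal fiber bundle.

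For the $1$-form, the paragraph preceding the Corollary already shows that $\homega$ is $K$-invariant and $K$-horizontal on $\varP$. I would then define $\homega_\varQ \vc i^*\homega$ via the inclusion $i : \varQ \hookrightarrow \varP$; since $f_S|_\varQ = \mathrm{id}_\varQ$, this equals $i^*\omega$ as well. The splitting $\kh = \kj \oplus \kk$ then yields $\homega_\varQ = \omega_\varQ^\kj \oplus \homega_\varQ^\kk$. To check that $\omega_\varQ^\kj$ is a $J$-connection, I would verify the two defining axioms directly from $\homega = S^{-1}\omega S + S^{-1}\dd S$. On a $J$-fundamental vector field $X_\xi$ with $\xi \in \kj$, the term $S^{-1}\dd S(X_\xi)$ vanishes by $J$-invariance of $S$, and $\Ad_{S^{-1}}(\xi) = \xi$ since $S \in K$ and $\xi \in \kj$ commute in $H = J \times K$; hence $\homega(X_\xi) = \xi$ and its $\kj$-projection gives $\omega_\varQ^\kj(X_\xi) = \xi$. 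For the $\Ad_J$-equivariance, pull $\homega$ back along $R_j$: using $R_j^*S = S$, $R_j^*\omega = \Ad_{j^{-1}}\omega$, and the commutativity of $J$ with $K$, one obtains $R_j^*\homega = \Ad_{j^{-1}}\homega$, and the fact that $\Ad_J$ stabilizes $\kj$ yields the law for $\omega_\varQ^\kj$. The $\kk$-component $\homega_\varQ^\kk$ is $K$-gauge invariant because the full $\homega$ on $\varP$ is (this is the Main Lemma applied to the subgroup $K \subset H$), and $\Ad_K$ stabilizing $\kk$ keeps the projection consistent.

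The main delicate point I expect is the implicit strengthening of the hypothesis on $S$ from bare $K$-equivariance to $J$-invariant-plus-$K$-equivariance; without it, $\varQ$ would not be $J$-stable as a subset of $\varP$, and the $J$-bundle structure would have to be transported across the isomorphism $\varQ \simeq \varP/K$ of Prop.~\ref{prop-S-function-topology} via a twisted $J$-action $q \cdot j = qj\,S(qj)$. Once this point is made explicit, the remaining verifications are short algebraic computations in $H = J \times K$.
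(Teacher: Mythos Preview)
Your proposal is correct, but it diverges from the paper in two places worth noting.

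First, for the $J$-bundle structure on $\varQ$: the paper does \emph{not} assume $J$-invariance of $S$. Instead it uses exactly the alternative you sketch in your last paragraph: the isomorphism $\varQ \simeq \varP/K$ from Prop.~\ref{prop-S-function-topology}, together with the identification $\varP/K = \varP \times_{L'} J$ (where $L'_{(j,k)}(j') = jj'$) as an associated bundle carrying a residual right $J$-action. So your ``delicate point'' is not an implicit hypothesis in the paper but is bypassed via the quotient; the remark immediately after the corollary (``Since the $J$-equivariance of $S$ is not specified\dots'') confirms this. Your primary argument under the extra $J$-invariance assumption is cleaner to write down, but it proves a weaker statement.

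Second, for the $J$-connection property of $\omega_\varQ^\kj$: the paper is more economical. Since $S$ is $K$-valued and $J,K$ commute in $H = J \times K$, both $\Ad_{S^{-1}}$ and $S^{-1}\dd S$ act trivially on the $\kj$-summand, so the decomposition $\homega = \omega^\kj \oplus \homega^\kk$ shows at once that the $\kj$-component of $\homega$ coincides with the $\kj$-component of the original $H$-connection $\omega$. Its restriction $\iota^*\omega^\kj$ is then automatically a $J$-connection on $\varQ$. Your axiom-by-axiom verification reaches the same conclusion but reproves this identity along the way.
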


\begin{proof}
Using the proof of Prop.~\ref{prop-S-function-topology}, one has $\varQ \simeq \varP/K = \varP \times_L H/K = \varP \times_{L'} J$ where $L$ is the induced left action of $H$ on $H/K = J$, and $L'_{(j,k)}(j') = j j'$, so that $\varQ$ is a $J$-principal fiber bundle. Using $\kh = \kj \oplus \kk$, one has $\omega = \omega^\kj \oplus \omega^\kk$, and $\homega = \omega^\kj \oplus \homega^\kk$, because $S$ is $K$-valued. In this decomposition, $\omega^\kj$ is connection-like for gauge transformations along $J$, and $\homega^\kk$ is gauge invariant along $K$. Let $\iota : \varQ \rightarrow \varP$ be the inclusion. Then $\omega_\varQ^\kj = \iota^\ast \omega^\kj$ is a $J$-connection on $\varQ$, and $\homega_\varQ^\kk = \iota^\ast \homega^\kk$ is a $\kk$-valued $K$-gauge invariant $1$-form.
\end{proof}

Since the $J$-equivariance of $S$ is not specified, the fields $\homega_\varQ^\kk$ are not necessarily $J$-invariant. This corollary shows that the application of the Main Lemma in that specific situation corresponds to a reduction of the  principal fiber bundle $\varP \rightarrow \varQ$, which splits the connection $\omega$ into a connection on $\varQ$ and a $K$-gauge invariant $1$-form. This result is the same as the one obtained in \cite[Section~5.13]{Ster94a}, which is based, from the very beginning, on a procedure of principal fiber bundle reduction. The electroweak part of the Standard Model of particle physics is an example of this situation.

\section{Applications to gauge theories in particle physics}
\label{sec-applications-to-gauge-theories-in-particles-physics}

\subsection{The electroweak part of the Standard Model}
\label{sec-the-electroweak-part-of-the-standard-model-of-particles-physics}

This example is a simplified version of \cite{MassWall10a}, to which we refer for further details. See also \cite{Fadd09a} and \cite{CherFaddNiem08a}, where only the bosonic part of the Standard Model is considered.

The electroweak part of the Standard Model is a gauge theory with structure group $G = U(1) \times SU(2)$. In the following, the theory is presented for a trivial $G$-principal fiber bundle $\varP$ over the space-time $\varM$, so that the gauge group $\caG$ of the theory identifies with $\underline{G} = \underline{U(1)} \times \underline{SU(2)}$, which is a notation for the smooth maps from $\varM$ to $G$. All the maps and forms are then defined on $\varM$.

The part of the Lagrangian of the Standard Model we will consider is given by
\begin{equation}
\label{eq-lagrangianbeginning}
\caL[a_\mu, b_\mu, \varphi] 
= (D_\mu \varphi)^\dagger (D^\mu \varphi) - \mu^2 \varphi^\dagger \varphi - \lambda (\varphi^\dagger \varphi)^2
-\tfrac{1}{4} f_{\mu \nu} f^{\mu \nu} -\tfrac{1}{4} \sum_a g^a_{\mu \nu} g^{a\; \mu \nu}\,.
\end{equation}
In this Lagrangian, $\varphi$ is a $\gC^2$-valued scalar field\footnote{This field should be called the ``Englert–Brout–Higgs–Guralnik–Hagen–Kibble field'', from the names of the authors who discovered its importance in particle physics, while the field which gives rise to the Higgs boson after symmetry breaking deserves the name ``Higgs field''.}, $D_\mu \varphi = (\partial_\mu - i \tfrac{g}{2} b_\mu - i \tfrac{g'}{2} a_\mu) \varphi$ where $g$ and $g'$ are the coupling constants of $SU(2)$ and $U(1)$ respectively, $f_{\mu \nu}$ is the field strength of the $U(1)$-connection $a_\mu$, and $g_{\mu \nu}$ is the field strength of the $SU(2)$-connection $b_\mu$, written as $g_{\mu \nu} = g^a_{\mu \nu}\frac{\sigma_a}{2}$, where $\sigma^a$, $a=1,2,3$, are the Pauli matrices. This theory is invariant with respect to the following gauge transformations:
\begin{align*}
a_\mu^\zeta &=a_\mu + \tfrac{2i }{g'}\zeta^{-1}\partial_\mu \zeta,
&
b_\mu^\zeta &= b_\mu,
&
\varphi^\zeta &= \zeta^{-1}\varphi,
\\
a_\mu^\gamma &= a_\mu,
&
b_\mu^\gamma &=\gamma^{-1}b_\mu \gamma + \tfrac{2i}{g} \gamma^{-1}\partial_\mu \gamma,
&
\varphi^\gamma &= \gamma^{-1}\varphi,
\end{align*}
for any $\zeta \in \underline{U(1)}$ and any $\gamma \in \underline{SU(2)}$.

The scalar field $\varphi$ can be uniquely decomposed with respect to a fixed unit vector $\dotvarphi = \spmatrix{0 \\ 1}$ as $\varphi = \eta\, u \, \dotvarphi$ where $\eta : \varM \rightarrow \gR_+$ is the length of $\varphi$, and $u : \varM \rightarrow SU(2)$. With $\varphi = \spmatrix{\varphi_1 \\ \varphi_2}$, one has $\eta = \sqrt{|\varphi_1|^2 + |\varphi_2|^2}$ and $u = \tfrac{1}{\eta} \spmatrix{\overline{\varphi_2} & \varphi_1 \\ - \overline{\varphi_1} & \varphi_2}$, so that $u(x)$ is only defined when $\eta(x) \neq 0$ (more on this later). The reference vector $\dotvarphi$ defines a change of coordinates $\varphi \mapsto (\eta, u)$ in the fields space. For any $\gamma \in \underline{SU(2)}$, one has $\varphi^\gamma \mapsto (\eta^\gamma, u^\gamma)$ with $\eta^\gamma=\eta$ and $u^\gamma = \gamma^{-1} u$. Notice that for any $\zeta \in  \underline{U(1)}$, one has $\varphi^\zeta \mapsto (\eta^\zeta, u^\zeta)$ with $\eta^\zeta =\eta$ and $u^\zeta = u \hzeta$, where $\hzeta = \spmatrix{\zeta & 0 \\ 0 & \zeta^{-1}}$. In our scheme, $u$ is the \dressfield extracted from the \auxfield $\varphi$, while $\eta$ is a \resfield.

Applying the Main Lemma with the \dressfield $u$ and the connection $b_\mu$, the composite fields $B_\mu \vc u^{-1} b_\mu u + \tfrac{2i}{g} u^{-1} \partial_\mu u$ are $SU(2)$-gauge invariant. Only $SU(2)$-gauge transformations can be dealt with through the Main Lemma because $u$ does not carry the convenient representation with respect to $U(1)$-gauge transformations. The fields $B_\mu = B_\mu^a \sigma_a$ are no more $U(1)$-invariant, because $u^\zeta \neq u$. A convenient way to deal with $U(1)$-charged fields is to define $W^\pm_\mu \vc \tfrac{1}{\sqrt{2}}(B^1_\mu \mp i B^2_\mu)$, which satisfy $(W^\pm_\mu)^\zeta = \zeta^{\mp 2} W^\pm_\mu$, and to define $Z_\mu \vc \cos \theta_W B^3_\mu - \sin \theta_W a_\mu$, with $\cos \theta_W \vc \tfrac{g}{\sqrt{g^2 + g'^2}}$ and $\sin \theta_W \vc \tfrac{g'}{\sqrt{g^2 + g'^2}}$, which satisfy $Z_\mu^\zeta = Z_\mu$, so that  the fields $Z_\mu$ are invariant for the whole gauge group. Then it is natural to define $A_\mu \vc \sin \theta_W B^3_\mu + \cos \theta_W a_\mu$, which is then a $U(1)$-connection, $A_\mu^\zeta = A_\mu + 2 i \frac{1}{e} \zeta^{-1} \partial_\mu \zeta$, for the charge $e=g \sin \theta_W$. In \cite{MassWall10a}, the spinor fields of the theory give rise also to $SU(2)$-gauge invariant composite fields, as expected by the second part of the Main Lemma. 

We can now perform the two changes of variables $(a_\mu, b_\mu, \varphi) \mapsto (a_\mu, B_\mu, \eta, u) \mapsto (A_\mu, Z_\mu, W^\pm_\mu, \eta, u)$ in the Lagrangian. Thanks to the $SU(2)$-gauge invariance of the Lagrangian, the $u$ field disappears, so that $\caL[a_\mu, b_\mu, \varphi] = \caL[a_\mu, B_\mu, \eta] = \caL'[ A_\mu, Z_\mu, W^\pm_\mu, \eta]$. The Lagrangian $\caL'[ A_\mu, Z_\mu, W^\pm_\mu, \eta]$ is trivially $SU(2)$-gauge invariant because \emph{all the fields in the Lagrangian are $SU(2)$-gauge invariant}, among them the composite fields associated to the curvature and covariant derivative of $b_\mu$ described in the Main Lemma. This is almost the Lagrangian describing the electroweak part of the Standard Model after symmetry breaking in the so-called unitary gauge, which is only fully recovered after expanding $\eta$ around the \emph{unique} minimum constant configuration\footnote{Recall that $\eta >0$, so that this minimum constant configuration is unique.} $\eta_0$ of its potential $V(\eta) \vc \mu^2 \eta^2 + \lambda \eta^4$ when $\mu^2 < 0$. This expansion corresponds to introducing the field of the Higgs boson of the Standard Model.

Notice the following important facts about this procedure which has already been detailled in \cite{MassWall10a}. Firstly, it is not a symmetry breaking since the variables $W^\pm_\mu$, $Z_\mu$ and $A_\mu$ can be defined through a change of variables without making reference to any energy scale. Secondly, this change of variables induces a extra factor $\eta^3$ in the functional measure of the corresponding functional integral. Finally, notice that the ordinary symmetry breaking mechanism performs two ``transformations'' of the Lagrangian at the same time, while they are clearly distinct in the present procedure: on the one hand, some redistribution of degrees of freedom, whose similarity with Goldstone mechanism is discussed in section~\ref{sec-comments}, and, on the other hand, the creation of true mass terms. The Main Lemma corresponds to the first transformation, while the second corresponds to choosing a constant classical configuration for the field $\eta$.

Let us consider now the situation where $\varP$ is not necessarily trivial. The change of variables can only be performed at points $x \in \varM$ where $\varphi(x) \neq 0$. Let us suppose that the field $u$ exists everywhere. The field $\varphi$ is a section of the associated vector bundle $\varP \times_\ell \gC^2$, where, for any $(\zeta,\gamma) \in U(1) \times SU(2)$ and $v \in \gC^2$, $\ell_{(\zeta,\gamma)} v = \zeta \gamma v$. Such a section can be described as an equivariant map $\tvarphi : \varP \rightarrow \gC^2$. The global existence of $u$ is equivalent to the non-vanishing of $\teta(p) \vc \norm{\tvarphi(p)}_{\gC^2}$ on $\varP$. Then, one can define $S : \varP \rightarrow SU(2)$ such that $\tvarphi(p) = \teta(p) S(p) \spmatrix{0 \\ 1}$. Using the uniqueness of this decomposition, one has $S(p\gamma) = \gamma^{-1} S(p)$ for any $\gamma \in SU(2)$ and $p\in \varP$. The map $S$ is as in Prop.~\ref{prop-S-function-topology}, so that the topology of $\varP$ is trivial in the $SU(2)$ direction. The present situation corresponds to $J = U(1)$ and $K = SU(2)$ in Corollary~\ref{cor-case-J-times-K}. The composite field $\homega \vc S^{-1} \omega S + S^{-1} \dd S$ has local components $A_\mu$, $W^{\pm}_\mu$ and $Z_\mu$,  where $\omega$ has local components $a_\mu + b_\mu$. The remaining non-trivial geometry is the one of the $U(1)$-principal fiber bundle $\varP/SU(2)$, which corresponds to electromagnetism.

\subsection{$SU(N)$-gauge theories}
\label{sec-su(n)-gauge-theories}

It is natural to ask whether it is possible to apply a similar procedure to any $SU(N)$-gauge theories, with arbitrary $N$. Recall that the $U(1)$ case has been successfully treated in section~\ref{sec-general-scheme}.


For $SU(N)$ with $N>2$, it is not possible to obtain the same result since the transformation of the \dressfield $u^\gamma = \gamma^{-1}u$ is not always true. Indeed, consider the decomposition $\gC^N\ni\varphi = \eta u \dotvarphi$, with $\eta : \varM \rightarrow \gR_+$, $u : \varM \rightarrow SU(N)$, and $\dotvarphi = \spmatrix{\mathbf{0}_{N-1} \\ 1}$, where $\mathbf{0}_{N-1}$ is the zero vector in $\gC^{N-1}$. The \dressfield $u = u(\varphi)$ is only defined \emph{modulo} the right multiplication of a field $V$ of the form $\spmatrix{v&0\\0&1}$ where $v : \varM \rightarrow SU(n-1)$ acts only on $\mathbf{0}_{N-1}$.  Denote by $[u] : \varM \rightarrow SU(N)/SU(N-1)$ the composite of $u$ with the quotient map $SU(N) \rightarrow SU(N)/SU(N-1)$. Then, the reference vector $\dotvarphi$ defines a change of coordinates $\varphi \mapsto (\eta, [u])$. This implies that the gauge transformation $u^\gamma = \gamma^{-1}u$ must be substituted by the formula $[u^\gamma]=[\gamma^{-1} u]$, so that the Main Lemma cannot be applied.

Nevertheless, applying the same ideas as in the Main Lemma leads to exhibiting some interesting structures. Let $\dotvarphi \in \gC^N$ be as before, and let us choose a map $u : \varphi \mapsto u(\varphi)$ such that $\varphi = \eta u(\varphi) \dotvarphi$. In the spirit of the Main Lemma, we can define the composite field $B_\mu \vc u^{-1} b_\mu u + \tfrac{2i}{g} u^{-1} \partial_\mu u$. For any $\gamma \in \underline{SU(N)}$, we use the notation $u^\gamma \vc u(\varphi^\gamma)$. The field $V(\varphi, u,\gamma) \vc u(\varphi)^{-1} \gamma u^\gamma$ is well defined and is necessarily of the form $V= \spmatrix{v & 0 \\ 0 & 1}$ for a $v = v(\varphi, u,\gamma) : \varM \rightarrow SU(N-1)$. Then the composite field $B_\mu$ transforms as $B_\mu^\gamma \vc (u^\gamma)^{-1} b_\mu^\gamma u^\gamma + \tfrac{2i}{g} (u^\gamma)^{-1} \partial_\mu u^\gamma = V^{-1} B_\mu V + \tfrac{2i}{g} V^{-1} \partial_\mu V$. It is then convenient to decompose $B_\mu$ as $B_\mu = \spmatrix{Y_\mu + c_N B^0_\mu & \sqrt{2} X^+_\mu \\ \sqrt{2} X^-_\mu & - c'_N B^0_\mu}$ with $c_N \vc \sqrt{\frac{2}{N(N-1)}}$, $c'_N \vc \sqrt{\frac{2(N-1)}{N}}$, $X^+_\mu$  (resp. $X^-_\mu$ ) some fields with values in $\gC^{N-1}$ as a column vector (resp. as a line vector), and $Y_\mu$ some fields with values in the Lie algebra $\ksu(N-1)$. Then the induced gauge transformations on these new variables are 
\begin{align}
\label{eq-gaugeactionSU(N-1)}
Y_\mu &\mapsto v^{-1} Y_\mu v + \tfrac{2i}{g} v^{-1} \partial_\mu v,
&
X^+_\mu &\mapsto v^{-1} X^+_\mu,
&
X^-_\mu &\mapsto X^-_\mu v,
&
B_\mu^0 \mapsto B_\mu^0.
\end{align}
This implies that under an active gauge transformation $\gamma \in \underline{SU(N)}$, the $B_\mu^0$ are invariant, the $X^\pm_\mu$ are $SU(N-1)$-charged fields, and the $Y_\mu$'s behave as connection fields for a $SU(N-1)$-gauge theory.

The Lagrangian of the theory can then be written in terms of the composite fields only, because $u$ does not appears explicitly, thanks to $SU(N)$-gauge invariance. The action of the gauge group $\underline{SU(N)}$ is still implemented, but it factorizes through the map $\underline{SU(N)} \rightarrow \underline{SU(N-1)}$ given by $\gamma \mapsto v(\varphi, u,\gamma)$, which is not a morphism of groups. Formally, the Lagrangian after the change of variables describes a $SU(N-1)$-gauge theory for the gauge actions \eqref{eq-gaugeactionSU(N-1)}, where only the fields $B_\mu^0$ and $X^\pm_\mu$ can be massive, and where $Y_\mu$ are (massless)  $SU(N-1)$-connections. 

A change in the choice of the map $u : \varphi \mapsto u(\varphi)$ corresponds to introducing a new $SU(N-1)$-valued field $w$ such that $u' = u \spmatrix{w & 0 \\ 0 & 1}$. Then, the relations between the corresponding composite fields $(B^0_\mu, X^\pm_\mu, Y_\mu)$ and $(B'^0_\mu, X'^\pm_\mu, Y'_\mu)$, are of the form \eqref{eq-gaugeactionSU(N-1)}, where $v$ is replaced by $w$. In this sense, $w$ implements a gauge transformation in the $SU(N-1)$-gauge theory.

In this construction, the Standard Model, which corresponds to $N=2$, is special in the sense that the field $u$ is uniquely determined by $\varphi$ through the requirement $\varphi = \eta u(\varphi) \dotvarphi$, and one can check that $V(u,\varphi,\gamma)= e$, so that the Main Lemma can be directly applied. As one adds an extra $U(1)$ symmetry, the $B_\mu^0$ fields are the fields entering in the definition of $Z^0_\mu$, the $X_\mu^\pm$ fields are the $W_\mu^\pm$'s, and there is no $Y_\mu$ fields.

\subsection{An alternative \dressfield}
\label{sec-an-alternative-dressing-field-particles}

The definition we have taken for the \dressfield $u$ makes apparent the \resfield $\eta$ from the beginning. We propose here an alternative \dressfield, denoted by $\tu$, which gives rise to the same Lagrangian after the change of field variables.

It is well-known that, given a reference vector $\dotvarphi = \spmatrix{0\\1}$,  $\gC^2 \backslash \{0\}$ identifies as a $SU(2)$-space to $SU(2) \times \gR_+^\ast$, where $SU(2)$ acts by left multiplication on itself on the latter space. To any $\varphi \in \gC^2 \backslash \{0\}$, we then associate the unique $\tu \in SU(2) \times \gR_+^\ast$ such that $\varphi = \tu \dotvarphi$. The space $G=SU(2) \times \gR_+^\ast$ is a group which contains $H=SU(2)$ as a subgroup, and $\tu = (u,\eta)$ can be used as a \dressfield, because the composite fields $\tB_\mu \vc \tu^{-1} b_\mu \tu + \tfrac{2i}{g} \tu^{-1} \partial_\mu \tu$ are mathematically well-defined as components of a $1$-form with values in $\ksu(2) \oplus i\gR$. $\tB_\mu$ is related to the composite field defined in \ref{sec-the-electroweak-part-of-the-standard-model-of-particles-physics} by $\tB_\mu = B_\mu + i\frac{2}{g} c_\mu$, with $c_\mu=\eta^{-1}\partial_\mu\eta \in \underline \gR$. Accordingly, this choice of \dressfield is equivalent to change $B_\mu^3\in\gR$ into $B_\mu^{3}+i\frac{2}{g}c_\mu\in\gC$, whereas the components $B_\mu^{1,2}$ are the same.

A straightforward computation shows that 
\begin{equation*}
D_\mu\varphi = \eta u \spmatrix{-i\frac{g}{2} W_\mu^+ \\ c_\mu+i(\frac{g}{2} B_\mu^3  - \frac{g'}{2} a_\mu)} \in \gC^2.
\end{equation*}
 The Lagrangian we then obtain is the same as the one expressed in the variables $\eta$, $W_\mu^\pm$, $Z_ \mu$, and $A_\mu$ in \ref{sec-the-electroweak-part-of-the-standard-model-of-particles-physics}. But the main feature of this variant is that the kinetic terms for $\eta$ does not emerge from the term $\partial_\mu \varphi$ in the Lagrangian, but from the  real field $c_ \mu$ directly. 

Then, the status of $\eta$ as an observable \resfield does not depend on the precise definition of the \dressfield, and it is determinated by the Lagrangian. This shows in particular the robustness of our scheme in this case.

\section{Application to general relativity as a gauge theory}
\label{sec-application-to-general-relativity-as-a-gauge-theory}

General relativity (GR) can be described as a gauge theory on a principal fiber bundle. But, contrary to Yang-Mills fields theories, GR cannot be defined with only ordinary connections (Ehresmann connections). It requires an additional structure, namely a soldering form. We choose to encode this larger structure into a Cartan connection. We refer to \cite{Shar97a} for details on Cartan connections and its relations to Ehresmann connections.

\subsection{Geometry of Cartan connections}
\label{sec-geometry-of-cartan-connections}

We consider the Lie group $H = SO(1,m-1)$ and its Lie algebra $\kh = \kso(1,m-1)$. Let $(\varP, \varpi)$ be a reductive Cartan geometry modeled  on the Lie algebras $(\kg, \kh)$ with structure group $H$, where $\kg = \kh \oplus \gR^m$ is a $H$-module decomposition. In this setting, $\varP$ is a $H$-principal fiber bundle over space-time $\varM$ of dimension $m$, and $\varpi$ is a Cartan connection over $\varP$, which satisfies, by definition:
\begin{enumerate}[itemsep=2pt,parsep=0pt]
\item $\varpi$ is a $\kg$-valued $1$-form on $\varP$;
\item $\upR_h^\ast \varpi = \Ad_{h^{-1}} \varpi$ for any $h \in H$ where $\upR$ is the right action of $H$ on $\varP$;
\item $\varpi(X^\varP) = X$ for any $X \in \kh$, where $X^\varP$ is the associated vertical vector fields on $\varP$ for the right action;
\item for any $p\in \varP$, $\varpi_{|p}$ realizes an isomorphism of vector spaces $T_p \varP \rightarrow \kg = \kh\oplus \gR^m$.
\end{enumerate}

The existence of such a Cartan connection implies that the principal fiber bundle $\varP$ is a reduction of the $GL^+_m(\gR)$-principal frame bundle $L\varM$ of $\varM$ \cite[Lemma~A.2.1]{Shar97a}, but it is not a natural bundle in the sense of \cite[Section~14]{KolaMichSlov93a}. Moreover, the Cartan connection $\varpi$ decomposes into two pieces, as $\varpi = \omega \oplus \beta$, corresponding to the decomposition $\kg = \kh \oplus \gR^m$. The $1$-form $\omega$, with values in $\kh$, is an Ehresmann connection on $\varP$, and $\beta$ is a tensorial $1$-form on $\varP$, which induces a non canonical isomorphism $\Phi_p : T_x \varM \rightarrow \gR^m$ for any $x \in \varM$ and $p \in \pi^{-1}(x)$, where $\pi : \varP \rightarrow \varM$ is the projection.

The group $SO(1,m-1)$ is associated to a (fixed) metric $\eta$ on $\gR^m$, whose components we denote by $(\eta_{ab})$. For any $x \in \varM$ and $p \in \pi^{-1}(x)$, the isomorphism $\Phi_p$ induces a metric $g_x$ on $T_x \varM$ by $g_x(X_{|x}, Y_{|x}) \vc \eta (\Phi_p(X_{|x}), \Phi_p(Y_{|x}))$. In other words, $\beta$ defines a metric $g$ on the base manifold $\varM$.

Let $\Xi : \varP \rightarrow H$, with $\Xi(ph) = \Ad_{h^{-1}}\Xi(p)$, be a gauge transformation of the principal fiber bundle $\varP$. It acts on $\omega$ and $\beta$ as $\omega^\Xi = \Xi^{-1} \omega \Xi + \Xi^{-1} \dd \Xi$ and $\beta^\Xi = \Xi^{-1} \beta$. This last relation suggests that the field $\beta$ could have been a good candidate, as a \dressfield, to define a gauge invariant composite field out of $\omega$. But the composite expression \eqref{eq-compositefield} would not be mathematically well-defined because $\beta^{-1}$ does not make sense. To define a composite field, we have to look at trivializations of these fields.

\subsection{Application of the Main Lemma}
\label{sec-application-of-the-main-lemma-gravity}

Let $\varU \subset \varM$ be an open subset such that $\varP$ can be trivialized with a local section $s : \varU \rightarrow \varP$ over $\varU$, and such that there is a coordinate system $(x^\mu)$ for $\varM$ over $\varU$. We define the local $1$-forms on $\varU$: $\Gamma \vc s^\ast \omega$ and $\Lambda \vc s^\ast \beta$. Using the coordinate system, $\Gamma$ is a matrix valued $1$-form $(\Gamma^a{}_{b\mu} \dd x^\mu)$ for $a,b=1,\dots,m$, and $\Lambda$ is a $\gR^m$-valued $1$-form $(\Lambda^a{}_\mu \dd x^\mu)$. The induced isomorphism $\Phi_p$ is given by $T_x \varM \ni X = X^\mu \partial_\mu \mapsto \Lambda_{|x}(X) = \Lambda^a{}_\mu(x) X^\mu\in\gR^m$, so that the matrix-valued function $(\Lambda^a{}_\mu)$ is invertible everywhere on $\varU$. For any gauge transformation $\Xi$ on $\varP$, we define its local expression $\gamma \vc s^\ast \Xi : \varU \rightarrow H$. Then the gauge transformations take the form $\Gamma^\gamma = \gamma^{-1} \Gamma \gamma + \gamma^{-1} \dd \gamma$ and $\Lambda^\gamma = \gamma^{-1} \Lambda$. 

For any vector spaces $V,W$, denote by $L(V,W)$ the space of linear maps $V \rightarrow W$. Then, for any $x \in \varU$, one has $\Gamma_{|x} \in T^\ast_x \varM \otimes L(\gR^m, \gR^m)$, and we use the identifications $\Lambda_{|x} \in L(T_x \varM, \gR^m)$, $\Lambda^{-1}_{|x} \in L(\gR^m, T_x \varM)$, and $(\dd \Lambda)_{|x} \in T^\ast_x \varM \otimes L(T_x \varM, \gR^m)$, so that $\Lambda^{-1}_{|x} \Gamma_{|x} \Lambda_{|x} \in T^\ast_x \varM \otimes L(T_x \varM, T_x \varM)$ and $\Lambda^{-1}_{|x} (\dd \Lambda)_{|x} \in T^\ast_x \varM \otimes L(T_x \varM, T_x \varM)$, where the products are compositions of linear maps on vector spaces. Using these mathematically well-defined expressions, we can apply the Main Lemma with the \dressfield $\Lambda$ to define the composite field
\begin{equation}
\label{eq-hgamma-lambda-gamma}
\hGamma \vc \Lambda^{-1} \Gamma \Lambda + \Lambda^{-1} \dd \Lambda,
\end{equation}
which is a gauge invariant field of $1$-forms on $\varU$ with values in $L(T_x \varM, T_x \varM)$. Using the coordinate system, $\hGamma$ is a collection of fields $\hGamma^\nu{}_{\rho\mu} = {\Lambda^{-1}}^\nu{}_ a \Gamma^a{}_{b\mu} \Lambda^b{}_\rho + {\Lambda^{-1}}^\nu{}_ a \partial_\mu \Lambda^a{}_\rho$. The gauge invariance is related to the fact that the latin indices (in terms of which the action of $H$ is written) have disappeared in favor of the geometric greek indices related to the coordinate system. The field $\hGamma$ is not a tensor field with respect to a change of coordinate system, and it behaves as Christoffel symbols. It defines a linear connection on the natural geometry of $\varM$ (the geometry of tensor fields), and it satisfies the metric condition $D^{\hGamma} g=0$ inherited from the (trivially satisfied) relation $D^{\omega} \eta = 0$. This example is particularly relevant to illustrate the procedure of geometrization described in the introduction. Some authors have interpreted the defining relation \eqref{eq-hgamma-lambda-gamma} as a gauge transformation of $\Gamma$ by an element of the gauge group $GL^+_m(\gR)$. But here, \eqref{eq-hgamma-lambda-gamma} is not a gauge transformation because $\Lambda$ is not in the gauge group of the initial $SO(1,m-1)$-gauge theory, and $\hGamma$ is no more a $SO(1,m-1)$-connection. 

General relativity is encoded by the gauge invariant Lagrangian (see \textsl{e.g.} \cite{GockSchu89a})
\begin{equation}
\label{eq-lagrangian-gauge-RG}
\caL[\Gamma, \Lambda] \vc R^a{}_b \wedge \ast(\Lambda^b \wedge \Lambda_a),
\end{equation}
where $(R^a{}_b)$ is the curvature of the connection $1$-form $\Gamma$, $\ast$ is the Hodge star operator defined by the metric $g$ (induced by $\beta$), and the lowering of the latin indices is done using the metric $(\eta_{ab})$ on $\gR^m$. Performing the (invertible) change of field variables $(\Gamma, \Lambda) \mapsto (\hGamma, \Lambda)$ in the Lagrangian, one gets the usual Einstein Lagrangian
\begin{equation*}
\caL[\hGamma, \Lambda] = \sqrt{\abs{g}} \hR_\text{scalar},
\end{equation*}
where $\abs{g}$ is the determinant of the metric matrix, and $\hR_\text{scalar}$ is the scalar curvature of the Christoffel symbols $\hGamma^\nu{}_{\rho\mu}$, which is obtained from the composite field $\hR$ of the curvature $R$ of $\Gamma$. Notice that the metric $g$ enters in the Lagrangian through $\Lambda$, so that this Lagrangian depends only on $\hGamma$ and $g$. The metric $g$ is the \resfield in our scheme.

The curvature of the Cartan connection $\varpi$ contains two terms: the first one is the curvature of  $\omega$, which has been used in the Lagrangian, and the second one is the covariant derivative $D\beta$ of $\beta$ along the $\omega$. Locally on $\varU$, this covariant derivative is the $2$-form with values in $\gR^m$: $\phi^a{}_{\mu \rho} \dd x^\mu \wedge \dd x^\rho \vc (\partial_\mu \Lambda^a{}_\rho + \Gamma^a{}_{b\mu} \Lambda^b{}_\rho) \dd x^\mu \wedge \dd x^\rho$. To this field, one can associate the gauge invariant composite field $\hphi^\nu \vc {\Lambda^{-1}}^\nu{}_a \phi^a{}_{\mu \rho} \dd x^\mu \wedge \dd x^\rho = \hGamma^\nu{}_{\rho\mu} \dd x^\mu \wedge \dd x^\rho$, which is the torsion $2$-form associated to the Christoffel symbols $\hGamma^\nu{}_{\rho\mu}$.

The change of field variables presented here gives rise to the same usual computations which relate the degrees of freedom of the gauge formulation of GR to its original geometrical formulation. The usual point of view consists in using the metric $g$ to perform a symmetry reduction of the $GL^+_m(\gR)$-principal frame bundle to the subgroup $SO(1,m-1)$. Instead of that, our procedure performs a reduction of the gauge symmetry group $SO(1,m-1)$ to ``nothing'', so that  we end up with a geometrical theory in the sense of section~\ref{sec-introduction}: in our point of view, the $GL^+_m(\gR)$-principal frame bundle belongs to the natural geometry of $\varM$.

\subsection{An alternative \dressfield}
\label{sec-an-alternative-dressing-field-gravity}

The \dressfield $\Lambda$ looks very much like the \dressfield $\tu$ defined in \ref{sec-an-alternative-dressing-field-particles}, in the sense that it contains at the same time the degrees of freedom of the gauge group and those of the \resfield $g$. It is possible to define an alternative \dressfield which contains only the gauge group degrees of freedom and makes apparent the \resfield from the beginning.

The procedure consists in writing a decomposition of $\Lambda$ in terms of two fields: one containing the $\frac{m(m-1)}{2}$ degrees of freedom of $SO(1,m-1)$, and the second one containing the remaining $\frac{m(m+1)}{2}$ degrees of freedom. From \ref{sec-application-of-the-main-lemma-gravity}, we know that $(\Lambda^a{}_\mu)$ is an invertible matrix, so that removing the degrees of freedom of $SO(1,m-1)$ from $\Lambda$ amounts to consider the quotient $GL^+_m(\gR) / SO(1,m-1)$, which, at the level of the $GL^+_m(\gR)$-principal frame bundle $L\varM$, corresponds to the choice of a metric $g$ on $\varM$ (with $\frac{m(m+1)}{2}$ degrees of freedom). But performing this quotient does not explicitly separate $\Lambda$ into a \dressfield and a \resfield.

Another way to proceed is to consider the matrix $(\Lambda^a{}_\mu)$ as a set of $m$ vectors $v_\mu =  (\Lambda^a{}_\mu)$ which defines a basis of $\gR^m$. The extraction of the \dressfield then consists to orthonormalize this basis for the metric $(\eta_{ab})$. Because the signature of $\eta$ is not Euclidean, the usual Gram–Schmidt process does not work. 

We rely on a procedure which works for any signature $(r,s)$, $r+s=m$ \cite{SimoChatSrin99a,BlytRobe02a}. Denote by $I_{r,s}$ the matrix $\spmatrix{\Id_r & 0 \\ 0 & -\Id_s}$, which represents the metric $\eta$ on $\gR^m$, and let $G = (\eta(v_\mu, v_\nu))$ be the Gram matrix of the basis $\{v_\mu\}$. It is a real symmetric matrix which can be diagonalized in the form $G = R D R^T$ where $R \in SO(m)$, $D = \diag(\lambda_1, \dots, \lambda_m)$, and $R^T$ is the transpose matrix. $G$ is the metric $g$, thus it is non degenerate, and it has the signature $(r,s)$: all the real numbers $\lambda_i$ are non zero, and we can choose an ordering of the $\lambda_i$'s such that $\lambda_i >0$ for $i \leq r$ and $\lambda_i < 0$ for $i > r$. Consider now  $U \vc \Lambda R \abs{D}^{-1/2}$ where $\abs{D} \vc \diag(\abs{\lambda_1}, \dots, \abs{\lambda_m})$. Then a straightforward computation shows that $U^T I_{r,s} U = I_{r,s}$, using the facts that $G = \Lambda^T I_{r,s} \Lambda$ and $\abs{D}^{-1/2} D \abs{D}^{-1/2} = I_{r,s}$. The matrix $U \in SO(r,s)$ then defines a pseudo-orthonormal basis on $(\gR^m, \eta)$.

Let us define the local field $T \vc U^{-1} \Lambda = \abs{D}^{1/2} R^{-1}$. The matrices $G$, $R$ and $D$ are gauge invariant, so that $T$ is gauge invariant. On the other hand, a gauge transformation $\Lambda \mapsto \gamma^{-1} \Lambda$ induces a transformation $U \mapsto \gamma^{-1} U$. The local field $U$ can then be considered as a \dressfield extracted from $\Lambda^a{}_\mu = U^a{}_b T^b{}_\mu$. Applying the Main Lemma, we define the gauge invariant local field $\bGamma \vc U^{-1} \Gamma U + U^{-1} \dd U$. Performing this change of field variables in \eqref{eq-lagrangian-gauge-RG}, the Lagrangian can be locally written in terms of gauge invariant fields as:
\begin{equation*}
\caL[\bGamma, T] = \det(T) \, \eta^{bc}\, {T^{-1}}^{\nu}{}_a \bR^a{}_{b,\mu\nu} {T^{-1}}^{\mu}{}_c
\end{equation*}
where $\bR \vc \dd \bGamma + \frac{1}{2} [\bGamma, \bGamma] = (\bR^a{}_{b,\mu\nu} \dd x^{\mu} \wedge \dd x^{\nu})$ is the ``curvature'' of $\bGamma$. The field $T$ is then the natural \resfield in this formulation. The field $R$ and  $\abs{D}^{1/2}$ contain respectively $\frac{m(m-1)}{2}$ and $m$ degrees of freedom, so that $T$ contains exactly the $\frac{m(m+1)}{2}$ degrees of freedom of $\Lambda$ entering in the metric $g_{\mu\nu} \vc \eta_{ab} \Lambda^a{}_\mu \Lambda^b{}_\nu = \eta_{ab} T^a{}_\mu T^b{}_\nu$.

Given the Gram matrix $G$, the matrices $R$ and $D$ such that $G = R D R^T$ are not unique. Firstly, one can permute any couple of eigenvalues $\lambda_i$ and $\lambda_j$ provided that they have the same sign as required by the construction. This implies that for any $P \in \kS_r \times \kS_s \subset \kS_m$, represented as a $m\times m$ permutation matrix, $R' = RP$ and $D' = P^T D P$ is also a possible choice. Secondly, when $\lambda_i = \lambda_j$, for any rotation $S$ in the corresponding eigenspace of this eigenvalue, $R' = RS$ and $D' = S^T D S = D$ is also a possible choice. In both situation, one has $P, S \in SO(r)\times SO(s)$.

The above method applies to the Euclidean case as well: it is known as the Schweinler-Wigner orthogonalisation procedure, and $U$ is the Schweinler-Wigner basis \cite{SchwWign70a, ChatKapoSrin98a}. Notice that the Gram-Schmidt orthogonalization procedure would have provided us with a unique decomposition $\Lambda = Q R$, where $Q \in SO(m)$ represents an orthonormal basis, and $R$ is upper triangular with positive entries on the diagonal. This is the so called $QR$ decomposition. Both $Q$ and $U$ can be used as \dressfields for Euclidean gravity. For signature $(r,s)$, the pseudo-orthonormal basis $U$ could be named ``Schweinler-Wigner basis'' since it is constructed following the same scheme and displays analogous properties \cite{SimoChatSrin99a}. 

The decomposition $\Lambda = UT$ is defined on an open subset $\varU \subset \varM$ over which the $SO(r,s)$-principal fiber bundle is trivialized. Let us consider two such open subsets $\varU_i, \varU_j$ such that $\varU_i \cap \varU_j \neq \varnothing$. Then, with obvious notations, there exists $h : \varU_i \cap \varU_j \to SO(r,s)$ such that $\Lambda_j = h^{-1} \Lambda_i$ and $\Gamma_j = h^{-1} \Gamma_i h + h^{-1} \dd h$. The Gram matrices define a global structure $G$ (the metric $g$) on $\varM$ since $G_j = \Lambda_j^T I_{r,s} \Lambda_j = \Lambda_i^T h I_{r,s} h^T \Lambda_i = G_i$, so that the matrices $R_i, R_j, D_i, D_j$ are related over $\varU_i \cap \varU_j$ by a $SO(r)\times SO(s)$-matrix valued function $P$ such that $D_j = P^T D_i P$ and $R_j = R_i P$. This implies that, over $\varU_i \cap \varU_j$, one has $T_j = P^T T_i$ and $U_j = h^{-1} U_i P$. A straightforward computation then gives $\bGamma_j = P^{-1} \bGamma_i P + P^{-1} \dd P$. In this relation, $P$ does not depend on the geometry of the principal fiber bundle, so that the $\bGamma_i$'s do not depend on this fiber bundle. Nevertheless, $P$ depends on some choices performed over each $\varU_i$, and the $R_i$'s and $D_i$'s don't have convenient transformations under coordinate changes on $\varM$, thus the $\bGamma_i$'s are not defined globally on $\varM$.

The \dressfield $U$ then induces local $1$-forms $\bGamma$ independent of the $SO(r,s)$-principal fiber bundle, but contrary to the \dressfield $\Lambda$ used in \ref{sec-application-of-the-main-lemma-gravity}, it does not produce global geometrical objects on top of $\varM$. Finally, notice that the decomposition $G = R D R^T$ is ill-defined at points $x \in \varU$ such that $\lambda_i(x) = \lambda_j(x)$ if one requires $R$ to be smooth. This decomposition can only be used for generic situations.

\section{Application to a Yang-Mills-Higgs model based on Atiyah Lie algebroids}
\label{sec-application-to-a-yang-mill-higgs-model-based-on-atiyah-lie-algebroids}

The notion of connections admits many generalizations. Among them, some of us have developed the necessary mathematical structures which permit to define gauge field theory from transitive Lie algebroids. As a particular transitive Lie algebroid, we will use, in this example, the  Atiyah Lie algebroid of a $G$-principal fiber bundle $\varP$, for a connected Lie group $G$. We will use notations and results from \cite{LazzMass12a} and \cite{FourLazzMass13a}. The general theory of Lie algebroids can be found in \cite{Mack05a}.

\subsection{Generalized connections on Atiyah Lie algebroids}
\label{sec-generalized-connections-on-atiyah-lie-algebroids}

The transitive Atiyah Lie algebroid of a $G$-principal fiber bundle $\varP$ is defined as the short exact sequence of Lie algebras and $C^\infty(\varM)$-modules
\begin{equation*}
\xymatrix@1{{\algzero} \ar[r] & {\Gamma_G(\varP, \kg)} \ar[r]^-{\iota} & {\Gamma_G(T\varP)} \ar[r]^-{\pi_\ast} & {\Gamma(T \varM)} \ar[r] & {\algzero}},
\end{equation*}
with
\begin{align*}
\Gamma_G(T\varP) &= \{ \sfX \in \Gamma(T\varP) \mid \upR_{g\,\ast}\sfX = \sfX \text{ for all } g \in G \},
\\
\Gamma_G(\varP, \kg) &= \{ v : P \rightarrow \kg \mid v(p g) = \Ad_{g^{-1}} v(p) \text{ for all } g \in G \},
\end{align*}
where $\upR$ denotes the right action of $G$ on $\varP$ and $\iota$ is given by $\iota(v)(p) = \left( \frac{d}{dt} p  e^{-t v(p)} \right)_{|t=0}$. We will use the short notation $\lieA \vc \Gamma_G(T\varP)$ for the Lie algebroid, and $\lieL \vc \Gamma_G(\varP, \kg)$ for its kernel.

There is a natural notion of forms on $\lieA$ with values in the kernel $\lieL$, which defines a graded differential Lie algebra $(\Omega^\grast(\lieA, \lieL), \hd)$, where $\hd$ extends the de~Rham differential by a purely algebraic differential operator.  A generalized connection on $\lieA$ is defined to be a $1$-form $\varpi \in \Omega^1(\lieA, \lieL)$, and its curvature is defined as the $2$-form $R \vc \hd \varpi + \frac{1}{2}[\varpi, \varpi] \in \Omega^2(\lieA, \lieL)$. 

Let $\varE \vc \varP \times_\ell F$ be an associated vector bundle to $\varP$ for the representation $\ell$ of $G$ on a vector space $F$. We denote by $\ell_\ast$ the induced representation of $\kg$ on $F$. We identify the space of smooth sections as $\Gamma(\varE) = \{ \phi : \varP \rightarrow F \mid \phi(pg) = \ell(g^{-1}) \phi(p) \}$. Any connection $\varpi$ on $\lieA$ induces a covariant derivative $\lieA \ni \sfX \mapsto \hnabla_\sfX$ on $\Gamma(\varE)$ by the relation $\hnabla_\sfX \phi \vc \sfX\cdotaction \phi + \ell_\ast (\varpi(\sfX)) \phi$. 

We denote by $\caG$ the gauge group of $\varP$. An element $\Xi \in \caG$ is a map $\Xi : \varP \rightarrow G$ with $\Xi(pg) = \Ad_{g^{-1}}\Xi(p)$, and it acts naturally on $\Gamma(\varE)$: $\phi \mapsto \phi^\Xi \vc \ell(\Xi^{-1})\phi$. This action induces a natural action on the space of generalized connections through the requirement $ \hnabla^\Xi_\sfX \phi^\Xi = (\hnabla_\sfX \phi)^\Xi \vc \ell(\Xi^{-1}) \hnabla_\sfX \phi$ for any $\phi \in \Gamma(\varE)$ and any $\sfX \in \lieA$. Explicitly, one has $\varpi^\Xi = \Ad_{\Xi^{-1}} \varpi + \Xi^{-1} \hd \Xi$, where $\Xi^{-1} \hd \Xi \in \Omega^1(\lieA, \lieL)$ is defined as $\sfX \mapsto \Xi^{-1} (\sfX\cdotaction \Xi) \in \lieL$.

Ordinary connections on $\varP$ are contained in this space of generalized connections \cite{LazzMass12a}: a generalized connection $\varpi \in \Omega^1(\lieA, \lieL)$ is an ordinary connection if, and only if, $\varpi \circ \iota = -\Id_\lieL$. This inclusion is compatible with the respective notions of curvature and gauge group actions. In particular, the space of ordinary connections is stable under the action of the gauge group. To any generalized connection $\varpi$ on $\lieA$, we associate its reduced kernel endomorphism $\tau : \lieL \rightarrow \lieL$ defined by $\tau \vc \varpi \circ \iota + \Id_\lieL$. This endomorphism on $\lieL$ is the obstruction for $\varpi$ to be an ordinary connection. Under the previously defined gauge transformations, one has $\tau^\Xi = \Ad_{\Xi^{-1}} \tau$. The reduced kernel endomorphism associated to a generalized connection represents the ``algebraic'' part of the connection. In order to extract the geometric part, it is necessary to introduce a fixed background ordinary connection $\dotomega$ on $\varP$ (see \cite{FourLazzMass13a} for details). Then $\omega \vc \varpi + \tau \circ \dotomega$ is an ordinary connection, \textsl{i.e.} it satisfies $\omega \circ \iota = - \Id_\lieL$, and it transforms as a connection under gauge transformations: $\omega^\Xi = \Xi^{-1} \omega \Xi + \Xi^{-1} \hd \Xi$.

\subsection{Application of the Main Lemma}
\label{sec-application-of-the-main-lemma-algebroids}

In the following, we consider a gauge theory based on a subspace of the total space of generalized connections, which we require to be stable by gauge transformations. A minimal and convenient candidate consists in fixing an element $\dottau$ in the space of endomorphisms of $\lieL$ (as sections of a vector bundle), and in collecting all the generalized connections whose reduced  kernel endomorphism is of the form $\Ad_{u} \dottau$ for any $u \in \caG$. We denote this subspace of generalized connections by $\caA_{\dottau}$. Ordinary Yang-Mills theories correspond to the choice $\caA_0$ (\textsl{i.e.} $\dottau = 0$), and, in the following, the choice $\caA_{\Id_\lieL}$ (\textsl{i.e.} $\dottau = \Id_\lieL$) will be considered. The general situation is much more involved and outside the scope of the present paper, but will be studied in a forthcoming paper. In order to further simplify our model, we assume that the group $G$ is such that $\Ad$ is faithful, or, since $G$ is connected, that $G$ is centerless. This implies that the reduced kernel endomorphism $\tau$ associated to any generalized connection in $\caA_{\Id_\lieL}$ can be parametrized by the variable $u \in \caG$ as $\tau = \Ad_{u} \dottau$ with $\dottau=\Id_\lieL$. Let $\Xi$ be a gauge transformation. Then to $\tau^\Xi$ is uniquely associated $u^\Xi \in \caG$ such that $u^\Xi \dottau (u^\Xi)^{-1} = \Xi^{-1} u \dottau u^{-1} \Xi$, so that $u^\Xi = \Xi^{-1} u$.

We apply the Main Lemma to the (ordinary) connection $\omega$ with the \dressfield $u$, and we define the composite field $\homega \vc u^{-1} \omega u + u^{-1} \hd u$. We can summarize the previous steps by the successive changes of field variables $\varpi \overset{\dotomega}{\mapsto} (\omega,\tau) \overset{\dottau}{\mapsto} (\omega,u) \mapsto (\homega, u)$. In the same way, any $\phi \in \Gamma(\varE)$ defines a gauge invariant composite field $\hphi \vc u^{-1} \phi$.

In this example, the \dressfield $u$ belongs to the $\caG$-space $(\caG, \sfR)$. Let us comment this important point. In the ordinary differential geometry of fiber bundles with connections, we have recalled in section~\ref{sec-general-scheme} that the gauge group action is usually defined through the \emph{geometrical} action of $\caG$, which moves points of $\varP$ along its fibers. This action is then induced on the various associated elements of the theory, for instance through pull-back on functions and forms on $\varP$. In the present situation, the gauge group of the theory is also the gauge group of $\varP$, but its action on the space of generalized connections $\Omega^1(\lieA, \lieL)$ \emph{is not} induced by the geometry of $\lieA$ and $\lieL$. Indeed, this action has been defined by the \emph{field theoretical} requirement that (generalized) covariant derivatives transform homogeneously.\footnote{This requirement corresponds in physics to the so-called gauge principle \cite{ORai86a}.} On the subspace of ordinary connections, these two gauge actions coincide, but they do not on the whole of $\Omega^1(\lieA, \lieL)$. Reported on the field $u$, this action is $\sfR$ but not $\alpha$.

\subsection{The model and its physical content}
\label{sec-the-model-and-its-physical-content}

In order to understand the meaning of the composite fields constructed in this example, we consider a gauge invariant Lagrangian for connections in $\caA_{\dottau}$ and ``matter fields'' $\phi \in \Gamma(\varE)$. We refer to \cite{FourLazzMass13a} for details on the  construction of a gauge invariant action functional, of which we only describe the salient results here. The construction of this action functional  requires a non degenerate and inner non degenerate metric $\hg$ on $\lieA$, which can be decomposed into three pieces, $(g,h,\dotomega)$, where $g$ is a (non-degenerate) metric on the base manifold $\varM$, $h$ is a non-degenerate metric on $\lieL$, and $\dotomega$ is an ordinary connection on $\varP$, which will be our fixed background ordinary connection. The inner metric $h$ is required to be a Killing inner metric, to ensure that the Lagrangian is gauge invariant.

Let $\varU \subset \varM$ be an open subset which trivializes $\varP$. Then $\tau$, $\omega$, and $\dotomega$, have local expressions $\tau_\loc$, $\omega_\loc \vc A - \theta$, and $\dotomega_\loc \vc  \dotA - \theta$, respectively, for $\tau_\loc \in C^\infty(\varU) \otimes \End(\kg) = C^\infty(\varU) \otimes \kg^\ast \otimes \kg$, where $A, \dotA \in \Omega^1(\varU) \otimes \kg$ are local connection $1$-forms on $\varU$, and where $\theta \in \kg^\ast\otimes \kg$ is the Maurer-Cartan form on $G$. The local expression of the curvature of $\varpi$ decomposes into three terms of specific bidegrees in $\Omega^\grast(\varU) \otimes \exter^\grast \kg^\ast \otimes \kg$.  They can be expressed using the (ordinary) curvatures $F$ and $\dotF$ of $A$ and $\dotA$. The first one, of bidegree $(2,0)$, is $\tF \vc F - \tau_\loc \circ \dotF$; the second one, of bidegree $(1,1)$, is $D\tau_\loc \vc \dd \tau_\loc + [A, \tau_\loc] - \tau_\loc([\dotA, \theta])$; and the third one, of bidegree $(0,2)$, is  $W \vc \tau_\loc([\theta, \theta]) - [\tau_\loc, \tau_\loc]$. The Lagrangian is defined using a Hodge star operation induced by the metric $\hg$, and the curvature of $\varpi$. Locally, this Lagrangian reduces to the sum of the squares of the above three terms, where the contractions are performed using $g_{\mu\nu}$ for the geometric indices and $h_{ab}$ for the indices along the Lie algebra $\kg$ in a given basis $\{E_a\}$.

Let us now return to the case $\dottau = \Id_\lieL$, but maintaining the notation $\dottau$ for a while. In \cite{FourLazzMass13a}, it is shown that $\dottau=\Id_\lieL$ is the unique element in the gauge orbit $\{\Ad_u \dottau \mid u \in \caG \}$ which is trivialized as $\theta$ in any local trivialization of $\varP$. The composite field $\homega$ has a local expression $\hA - \theta$ on $\varU$, with $\hA = u_\loc^{-1} A u_\loc + u_\loc^{-1} \dd u_\loc$, where $u_\loc : \varU \rightarrow G$ is the local expression of $u \in \caG$, while $\tau_\loc = \Ad_{u_\loc}(\dottau_\loc)$. Using the change of variables $(A, \tau_\loc) \mapsto (\hA, u_\loc)$, a direct computation shows that $\hF \vc  \dd \hA + \tfrac{1}{2}[\hA, \hA] =  \Ad_{u_\loc^{-1}}(F)$ is the composite field of the curvature $F$ of $A$ as described in the Main Lemma, so that 
\begin{align}
\tF &= \Ad_{u_\loc} (\hF - \dottau_\loc \circ \dotF),
&
D\tau_\loc &= \Ad_{u_\loc} (\hD \dottau_\loc),
&
W &= \Ad_{u_\loc} (\dotW),
\label{eq-curvature-terms}
\end{align}
where $\hD \dottau_\loc \vc \dd \dottau_\loc + [\hA, \dottau_\loc] - \dottau_\loc([\dotA, \theta])$, and $\dotW \vc \dottau_\loc([\theta, \theta]) - [\dottau_\loc, \dottau_\loc]$. Since $\dottau = \Id_\lieL$, \eqref{eq-curvature-terms} simplifies into $\tF = \Ad_{u_\loc} (\hF - \dotF)$, $D\tau_\loc = \Ad_{u_\loc}([\hA - \dotA, \theta])$, and $W=0$.

With the help of a convenient metric on the vector bundle $\varE$, we can add to the Lagrangian a term coupling $\varpi$ with a field $\phi \in \Gamma(\varE)$, using again the Hodge star operation and the covariant derivative $\hnabla \phi$. Locally, $\hnabla \phi$ decomposes into two terms, $\dd \phi + \ell_\ast(A) \phi = \ell(u_\loc) (\dd \hphi + \ell_\ast(\hA) \hphi)$, and $-\ell_\ast\tau(\dotomega_\loc) \phi = - \ell(u_\loc) (\ell_\ast\dottau (\dotomega_\loc) \hphi) = - \ell(u_\loc) \ell_\ast(\dotomega_\loc) \hphi$. The induced terms in the Lagrangian are the squares of these two terms. 

By gauge invariance, the field $u_\loc$ disappears from the Lagrangian, so that the transformed action depends only on  $\hA$ and $\hphi$, and on the triple $(g,h,\dotomega)$.  The gauge field part of the Lagrangian is the sum of the square of $[\hA - \dotA, \theta]$, which induces a mass term for the field $\hA$, and the square of $\hF - \dotF$, which is a kinetic term, \textsl{à la} Yang-Mills, for $\hA$. The matter field part of the Lagrangian contains an ordinary minimal coupling between $\hphi$ and $\hA$, and a mass term for $\hphi$ coming from the square of $\ell_\ast(\dotomega_\loc) \hphi$. 

Consequently, the original Lagrangian describes \emph{massless gauge} fields $A$, $\tau$ and $\phi$, while, after application of the Main Lemma, it describes \emph{gauge invariant massive} vector and matter fields $\hA$ and $\hphi$. Similarly to what the (usual) Higgs mechanism does through a ``phase transition'', this theory solves the problem of combining gauge symmetries with massive vector fields. The only difference with an ordinary Yang-Mills field theory coupled to massless matter fields lies in the choice of the space of ``admissible'' generalized connections: $\caA_0$ \textsl{versus} $\caA_{\Id_\lieL}$.

Contrary to the case of the Standard Model described in \ref{sec-the-electroweak-part-of-the-standard-model-of-particles-physics}, the change of variables performed in this toy model does not induce an extra factor in the functional measure of the corresponding functional integral when the group $G$ is unimodular (in the sense that $\det(\Ad_g) = 1$ for any $g \in G$). Moreover, the construction can be done for any connected centerless Lie group $G$, for instance for the family of groups $SU(N)/\gZ_N$ for any $N>1$, which are the typical groups used in particle physics \cite{ORai86a}.

\section{Comments}
\label{sec-comments}

\begin{table}[t]
\newcolumntype{C}{>{\centering\arraybackslash}m{10.7em}}
\newcolumntype{D}{>{\centering\arraybackslash}m{8.5em}}
\renewcommand{\arraystretch}{1.2}
\setlength{\tabcolsep}{0pt}
\small
\centering
\begin{tabular}{@{}DCCC@{}}
\toprule
Gauge theory
&
E-W part of the\linebreak Standard Model\linebreak (\caseone)
&
Einstein's theory\linebreak of Gravity\linebreak (\casetwo)
&
Yang-Mills-Higgs theory\linebreak
on Atiyah Lie algebroid\linebreak (\casethree)
\\
\toprule
Structure group 
&
$U(1)\times SU(2)$
&
$SO(1,m-1)$
&
$G$
\\
\textsl{dimension $(1)$}
&
$1+3$
&
$\frac{m(m-1)}2$
&
$n$
\\
\midrule
Connections $\omega$
&
$a_\mu + b_\mu$
&
$\Gamma$
&
$\omega$
\\
\textsl{dimension $(2)$}
&
$m+3m$
&
$\frac{m^2(m-1)}2$
&
$mn$
\\
\midrule
\Auxfield
&
$\varphi$
&
$\Lambda$
&
$\tau$ s.t. $\tau=u^{-1}\dottau u$
\\
\textsl{dimension $(3)$}
&
$4$
&
$m^2$
&
$n$
\\
\midrule
Reference configuration
&
$\dotvarphi= \spmatrix{0\\1}$
&
$(\dd x^\mu)$
&
$\dottau$
\\
\midrule
\Dressfield $u$
&
$u$ s.t. $\varphi=u\eta\dotvarphi$
&
$(\Lambda^a{}_\mu)$ s.t. $\beta^a = \Lambda^a{}_\mu \dd x^\mu$
&
$u$ s.t. $\tau=u^{-1}\dottau u$
\\
\midrule
Degrees of freedom
\linebreak
of the theory
\linebreak
$(2)+(3)-(1)$
&
$4m$
&
$\frac{m(m^2+1)}2$
&
$mn$
\\
\bottomrule
\end{tabular}
\caption{Fields involved in the three examples described in the text, with their meanings and degrees of freedom, before applying the Main Lemma.}
\label{table-before-MainLemma}
\end{table}

\begin{table}[t]
\newcolumntype{C}{>{\centering\arraybackslash}m{10.7em}}
\newcolumntype{D}{>{\centering\arraybackslash}m{8.5em}}
\renewcommand{\arraystretch}{1.4}
\setlength{\tabcolsep}{0pt}
\small
\centering
\begin{tabular}{@{}DCCC@{}}
\toprule
Gauge theory
&
E-W part of the\linebreak Standard Model\linebreak (\caseone)
&
Einstein's theory\linebreak of Gravity\linebreak (\casetwo)
&
Yang-Mills-Higgs theory\linebreak
on Atiyah Lie algebroid\linebreak (\casethree)
\\
\toprule
Residual group
&
$U(1)$
&
$\{e\}$
&
$\{e\}$
\\
\textsl{dimension $(1)$}
&
$1$
&
$0$
&
$0$
\\
\midrule
Gauge invariant
\linebreak
composite fields
\linebreak
$\homega=u^{-1}\omega u + u^{-1}\dd u$
&
$A_\mu$, $W_\mu^\pm$, $Z_\mu$
&
$\hGamma^{\nu}{}_{\rho\mu}$ s.t. $D^{\widehat\Gamma} g=0$
&
$\hA_\mu$
\\
\textsl{dimension $(2)$}
&
$m+3m$
&
$\frac{m^2(m-1)}2$
&
$mn$
\\
\midrule
\Resfield
\linebreak
of the theory
&
$\eta\spmatrix{0\\1}=u^{-1}\varphi$
&
$g_{\mu\nu}\dd x^\mu \otimes \dd x ^\nu=\eta(\beta,\beta)$
&
$\dottau$
\\
\textsl{dimension $(3)$}
&
$1$
&
$\frac{m(m+1)}2$
&
$0$
\\
\midrule
Degrees of freedom
\linebreak
of the theory
\linebreak
$(2)+(3)-(1)$
&
$4m$
&
$\frac{m(m^2+1)}2$
&
$mn$
\\
\bottomrule
\end{tabular}
\caption{Fields involved in the three examples described in the text, with their meanings and degrees of freedom, after applying the Main Lemma.}
\label{table-after-MainLemma}
\end{table}

In this section, we comment on the structures involved in our scheme applyied to the three main examples described above. We will designate by \caseone the example of the electroweak part of the Standard Model, \caseonebis its variant proposed in \ref{sec-an-alternative-dressing-field-particles}, \casetwo the example of the general relativity as a gauge theory, \casetwobis its variant proposed in \ref{sec-an-alternative-dressing-field-gravity}, and \casethree the gauge theory defined on Atiyah Lie algebroids. The reader is advised to look at Table~\ref{table-before-MainLemma} and Table~\ref{table-after-MainLemma} which fix the terminology we use in the following.

\subsection{About the dressing field}
\label{sec-about-the-dressing-field}

From the examples described in this paper, our scheme can be summarized as follows. A gauge theory contains a finite set of fields $\{\varphi_0, \varphi_1, \dots, \varphi_N \}$ in $\caG$-spaces. One of these fields, say $\varphi_0$, is chosen as the ``\auxfield'', and we decompose it into a couple $(\rho, u)$, where $u$ is the \dressfield of the Main Lemma, and $\rho$ is a residual field. The map $\varphi_0 \mapsto (\rho, u)$ has to be one-to-one and mathematically well-defined. In our examples, $u$ carries all the action of the gauge group that is factored out in the theory, so that $\rho$ is gauge invariant. The next step is to apply the Main Lemma to all the remaining fields in order to get gauge invariant composite fields $\hvarphi_i$, for $i=1, \dots, N$. 
The \auxfields identified in the three main examples are given in Table~\ref{table-before-MainLemma}. In \caseone, the scalar field $\varphi$ is added for phenomenological purposes in the model, while in \casetwo and \casethree, $\Lambda$ and $\tau$ are natural (mathematical) objects, which appear as components of extended notions of connections on Cartan geometries and on transitive Lie algebroids. 

The construction of gauge invariant fields proposed in \cite{Dira55a} and \cite{LaveMcMu97a} fit in our scheme. In these examples, the \auxfield is the connection $\omega$ itself, and the \dressfield $u$ is extracted from $\omega$ using a gauge-like constraint $\chi(\omega) = 0$ (similar to a gauge fixing) which consists to select a particular element in each gauge orbit (up to the Gribov ambiguity problem). This is not the usual gauge fixing procedure used to quantize the theory, in the sense that it does not consist to add a gauge fixing term in the Lagrangian: the procedure is applied at the level of $\caG$-spaces of fields. The \auxfield is then decomposed as $\omega \mapsto (\omega_\text{res}, u)$, where $\omega_\text{res}$, the \resfield, is the unique element of the gauge orbit of $\omega$ satisfying the constraint $\chi(\omega_\text{res}) = 0$, and where $u$ is computed, as a \emph{non local} expression in terms of $\omega$, such that $\omega = u \omega_\text{res} u^{-1} + u \dd u^{-1}$. Then, all the other fields of the theory, \textsl{i.e.} the matter fields, are mapped to gauge invariant composite fields. Notice that in \caseone, the decomposition $\varphi \mapsto (\eta, u)$ corresponds to the choice of the unique element $\eta \spmatrix{0 \\ 1}$ in the gauge orbit of $\varphi$, which solves the gauge-like constraint $\chi(\varphi) = \big\lVert \lVert \varphi \rVert \spmatrix{0 \\ 1} - \varphi \big\rVert = 0$ where $\lVert \cdot \rVert$ is the $\gC^2$-norm. The choice of this particular element in the gauge orbit depends on a reference configuration, here $\dotvarphi = \spmatrix{0\\ 1}$, see \ref{sec-dependence-on-a-reference-configuration}. Contrary to \cite{Dira55a} and \cite{LaveMcMu97a}, in our examples, the so-chosen \auxfield $\varphi_0$ is not the (Ehresmann) connection itself: this has the advantage to get \emph{local} expressions of $u$ in terms of $\varphi_0$.

In our scheme, the \dressfield $u$ yields a transfer of some degrees of freedom from the \auxfield to the connection $\omega$ giving rise to the composite field $\homega$, see the counting in Table~\ref{table-before-MainLemma} and in Table~\ref{table-after-MainLemma}. This transfer is reminiscent to the usual absorption of Goldstone scalar bosons in the spontaneous symmetry breaking mechanism. Indeed, in our scheme, the degrees of freedom which are absorbed in the gauge invariant composite fields correspond to the degrees of freedom neutralized in the symmetry group. However, \caseone shows that the strict identification between the  \dressfields and the Goldstone scalar bosons is not possible for several reasons. Firstly, the \dressfield is present from the beginning, before the symmetry neutralization, and independently of the sign of $\mu^2$. Secondly, the composite fields are gauge invariant, not massive fields: mass-like terms are generated, but in which $\eta^2$ replaces a constant mass parameter. An extra step is required to generate true mass terms when $\mu^2<0$, see \cite{MassWall10a} for details. A similar transfer of degrees of freedom can be checked in \casetwo. There, the \dressfield $\Lambda$ contains $m^2$ degrees of freedom: $\frac{m(m-1)}{2}$ degrees of freedom of $\Lambda$ correspond to the dimension of the neutralized group $SO(1,m-1)$, and the other $\frac{m(m+1)}{2}$ degrees of freedom enter in the definition of $g$ through the relation $g_{\mu\nu} = \eta_{ab} \Lambda^a{}_\mu \Lambda^b{}_\nu$. In \casethree, all the degrees of freedom of the group of symmetry are carried by the \dressfield $u$ in order to define the composite fields $\hA_\mu$.

\subsection{Dependence on a reference configuration}
\label{sec-dependence-on-a-reference-configuration}

In each case, the \dressfield is defined once a reference configuration in the space of the \auxfields is chosen. One wonders how the theory depends on this fixed reference configuration.

In \caseone, the question has been investigated in \cite{MassWall10a}: the unit vector $\dotvarphi = \spmatrix{0\\ 1}$ can be rotated by a constant element $v \in SU(2)$, and the new \dressfield $u'$ associated to $v \dotvarphi$ is related to the previous one by $u'=u v^{-1}$. Applying the Main Lemma, the new composite fields $B'_\mu$ are given by $v B_\mu v^{-1}$, which corresponds to a new global definition of the fields, in the same theory. Thanks to its $SU(2)$ invariance, the Lagrangian does not depend on $v$.  

In \casetwo, a change of the reference configuration $(\dd x^\mu)$ corresponds to a change of the coordinate system, for which $\dd x'^\mu=G_\nu^\mu \dd x^\nu$, with $G = \left( \frac{\dd x'^\mu}{\dd x^\nu}\right)$, so that $\Lambda'^a{}_\mu = \Lambda^a{}_\nu (G^{-1})_\mu^\nu$, and the new composite field $\hGamma'$ is related to $\hGamma$ by a Christoffel-like transformation. The Lagrangian being invariant by any change of coordinate systems, the theory is the same.

In \casethree, $\dottau$ can be mapped to $\dottau'=\Ad_v\dottau$, with $v\in\caG$, so that $u' = uv^{-1}$, and the new composite field is $\homega' = v\homega v^{-1} + v \hd v^{-1}$. This transformation is not a direct application of the Main Lemma since $\homega$ is not a connection $1$-form on the Lie algebroid $\lieA$, and $v$ cannot be identified as an \auxfield.  In the transformed Lagrangian, the field $v$ disappears, and the kinetic part of the fields $\homega'$ does not contain (constant) mass terms anymore: the mass-like terms depend on the non-constant reference configuration $\dottau'$. Obviously, the theories in terms of $\homega$ and $\homega'$ are equivalent. But, for at least two reasons, the reference configuration $\dottau = \Id_\lieL$ is a better and preferred physical parametrization. Firstly, it makes apparent the massive vector fields of the theory. Secondly, because, as noticed in section~\ref{sec-application-to-a-yang-mill-higgs-model-based-on-atiyah-lie-algebroids}, $\dottau = \Id_\lieL$ has the same local mathematical expression in any trivialization of $\lieL$, the masses obtained in the Lagrangian are the same in any trivialization, so that they are \emph{globally} defined.

It is tempting to consider the transformations from a reference configuration to another as gauge transformations, but this is not our point of view. An active gauge transformation should act on all the fields of the theory, which is not the case here. In the three examples, a gauge transformation of the \auxfield, in the original theory, is completely supported by the \dressfield $u$, so that the reference configuration is invariant. Moreover, a gauge transformation of the \dressfield is always of the form $u \mapsto \gamma^{-1} u$, while in the three cases, we have obtained some transformations of the form  $u \mapsto u v^{-1}$.  In \casetwo, the field $v = G$, being a change of coordinate system, is not an element of the gauge group. Finally, these transformations are not passive gauge transformations, in the sense defined in section~\ref{sec-introduction}, because it is not a change of local trivialization of the corresponding principal fiber bundle.

\subsection{Observables}
\label{sec-observables}

The Lagrangian of a gauge field theory supports two kind of symmetries: the gauge symmetry, and the symmetry under changes of coordinate system when the Lagrangian is written locally. Applying the Main Lemma, the gauge symmetry is neutralized, or only a part of it as in \caseone, in the sense that its action becomes trivial on the new fields of the theory. What remains is a theory which supports only the symmetry under changes of coordinate system. From this point of view, we have reduced a gauge field theory to a purely geometrical theory (or ``almost'' in \caseone), in the sense explained in section~\ref{sec-introduction}. This is crystal clear in \casetwo, where the final fields are those of the natural geometry of the base manifold $\varM$: the linear connection $\hGamma$ and the metric $g$. This reduction to ``more'' geometrical objects is meaningful in relation to physical observables. For instance, in \caseone, the composite fields $W^\pm_\mu$, $Z_\mu$ and $A_\mu$ are exactly the fields of the bosons experimentally detected, and the composite fermion fields give rise to the ``ordinary'' electron (\textsl{via} the combination of the left and right handed composite fields into a Dirac spinor field, see \cite{MassWall10a}). In \casetwo, the fields $\hGamma^\nu{}_{\rho\mu}$ are observable in a given coordinate system, when one uses the geodesic equation $\ddot{x}^\nu + \hGamma^\nu{}_{\rho\mu} \dot{x}^\rho \dot{x}^\mu = 0$ to measure them, exactly as we ``measure'' the components of a force vector in mechanics using the trajectory of a body under its influence. This is not the case for the original fields $\Gamma^a{}_{b \mu}$ and $\Lambda^a{}_\mu$. In the same way, in \casethree, we expect the fields $\hA_\mu$ to have the status of observables.

\smallskip
As can be noticed in Table~\ref{table-after-MainLemma}, the theories after the application of the Main Lemma exhibit some \resfields which deserve comments. A \resfield appears either as a byproduct of the decomposition of the \auxfield (\caseone) or when the \dressfield takes its values in a group $G$ larger than the structure group $H$ of the principal bundle (\casetwo and \caseonebis). Clearly this shows that the \resfield should carry the degrees of freedom of the \auxfield not involved in the corresponding \dressfield, as well as those of the \dressfield that are not transferred to the composite field. 
 
Nevertheless  \resfields are more difficult to identify than \dressfields. It does not seem possible to propose a systematic way to find them right away. However, given the criterion above, one could try to extract a \dressfield from the \auxfield as ``small'' as possible, that is, with values in the structure group $H$. The remaining degrees of freedom are then the wanted \resfield. \Caseone illustrates this possibility, but as shown in \casetwobis, this might not provide us with a globally well defined theory. 

It may be that the better strategy is to let the \resfield emerge as observable field directly from the Lagrangian after the change of variables\footnote{``Only the theory decides what can be observed'' once said Einstein to Heisenberg, as reported by the latter in \textsl{Physics \& beyond}.}.

\subsection{Relations to other symmetry reduction procedures}
\label{sec-relations-to-other-symmetry-reduction-procedures}

Our scheme is not a gauge fixing because it does not consist to add a constraint equation in the functional measure or in the Lagrangian of the theory. Rather, our scheme relies on a change of field variables at the level of the functional $\caG$-spaces of the theory. However, concerning the question of quantization, our procedure gives rise to a convenient result which is also the aim of some gauge fixing procedures: in \caseone and \casethree, the volume of the gauge group (or a part of it) can be factorized out in the functional integral. Indeed, after the change of field variables, the integrand depends only on gauge invariant fields, and not on the \dressfield, which captures the degrees of freedom of the gauge group. In this respect, our scheme answers, in an economic way, the question of the redundant gauge degrees of freedom, but its applicability is not universal: in \caseone, it remains to fix the $U(1)$-gauge symmetry. 

Relation between our scheme and the spontaneous symmetry breaking mechanism has already been discussed  in section~\ref{sec-the-electroweak-part-of-the-standard-model-of-particles-physics} and at the end of section~\ref{sec-about-the-dressing-field}.

Reductions of symmetries in the context of \caseone and \casetwo, have been formalized by some authors using reductions of principal fiber bundles. In \cite{Ster94a}, a map corresponding to our $\varphi$, which is called there a ``Higgs field'', is introduced, from which a map corresponding to our \resfield $\eta$, and a map corresponding to our \dressfield $u$, are extracted. This latter map performs the reduction $U(1)\times SU(2) \to U(1)$. In \cite{Trau79a, Neem79a, IvanSard81a} and more recently in \cite{Sard11a}, the metric $g$ is used to perform the reduction $GL^+_m(\gR) \to SO(1,m-1)$, and it is called a ``Higgs field'' in these papers. In both situations, the terminology ``Higgs field''  clearly designates different objects. Moreover, $\eta$ and $g$ have distinct mathematical status. But in our scheme, these two fields are \resfields of the neutralization procedure, so that they are ingredients of the same kind. It is satisfying that this formal analogy is also compatible with the physical fact that $g$ and $\eta$ are observables.

\section{Conclusion}

In this paper, we have put forward a scheme to construct gauge invariant composite fields from connections by performing a change of variables in the functional $\caG$-spaces of fields. As a result, the action of the gauge group on the new (composite and residual) fields is neutralized (trivial action of $\caG$), and this induces \textsl{de facto} a reduction of the symmetries of the theory. We have shown to what extent this scheme is different from other well-known methods of symmetry reductions: gauge fixing, spontaneous symmetry breaking, and reduction of principal fiber bundles. In order to illustrate this scheme, three main examples have been studied in details, providing a better understanding of this change of variables, in particular in terms of geometrization of gauge structures. It also makes apparent the specific role of the various fields involved in these theories, as well as their relationships.

In the first example, the composite fields give rise to the $Z_\mu$ and $W^\pm_\mu$ bosons of the Standard Model of particle physics. Relations with the usual spontaneous symmetry breaking mechanism have been discussed. In the second example, our procedure, applied to a $SO(1,m-1)$-gauge formulation of GR in terms of Cartan connections, allows us to construct the geometrical Christoffel symbols.  Comparisons have been made with other approaches which relate the gauge formulation and the geometric formulation of GR. The third example is a Yang-Mills-Higgs gauge theory, written in terms of generalized connections on Atiyah Lie algebroids, where the composite fields are massive vector bosons.

The diversity of the examples shows the versatility and the robustness of our scheme. Thus, we  expect that other relevant examples might be encompassed within our method. For instance, concerning the construction of the Wess-Zumino functional \cite{WessZumi71a}, formula~(4.33) in \cite{Zumi84a} and p.~164 in \cite{ManeStorZumi85a} suggest that our procedure could be applied in this context, so that the BRS treatment of anomalies could find a renewal. Another example may be found in \cite{Lorc13a} 
which is devoted to the problem of the proton spin decomposition. There, the group valued local field $U^{-1}_{\text{pure}}$ could be a candidate \dressfield extracted from the \auxfield $A_\text{pure} = U_{\text{pure}} \dd U^{-1}_{\text{pure}}$, see eqs.~(18) and (15), provided $A_\text{pure}$ can be  mathematically defined in a unique way in terms of the gauge field $A$.

\section*{Acknowledgments}

We would like to thank R.~Stora and Th.~Schücker for fruitful discussions.


\bibliography{bib-gauge-invariants}

\begin{thebibliography}{10}

\bibitem{EnglBrou64a}
F.~Englert and R.~Brout.
\newblock Broken symmetry and the mass of gauge vector mesons.
\newblock {\em Phys. Rev. Lett.}, 13:321--323, August 1964.

\bibitem{Higg64a}
P.~W. Higgs.
\newblock Broken symmetries and the masses of gauge bosons.
\newblock {\em Phys. Rev. Lett.}, 13:508--509, October 1964.

\bibitem{GuraHageKibb64a}
G.~S. Guralnik, C.~R. Hagen, and T.~W. Kibble.
\newblock Global conservation laws and massless particles.
\newblock {\em Phys. Rev. Lett.}, 13:585--587, November 1964.

\bibitem{KobaNomi96a}
S.~Kobayashi and K.~Nomizu.
\newblock {\em Foundations of Differential Geometry, vol. 1}.
\newblock Wiley Classics Library. Interscience Publishers, 1996.

\bibitem{KolaMichSlov93a}
I.~Kolar, P.~W. Michor, and J.~Slovak.
\newblock {\em Natural Operations in Differential Geometry}.
\newblock Springer-Verlag, 1993.

\bibitem{Trau79a}
A.~Trautman.
\newblock Fiber bundles, gauge fields, and gravitation.
\newblock In A.~Held, editor, {\em General Relativity and Gravitation},
  volume~1, page 287, New York, 1979. Plenum Press.

\bibitem{FourLazzMass13a}
C.~Fournel, S.~Lazzarini, and T.~Masson.
\newblock Formulation of gauge theories on transitive lie algebroids.
\newblock {\em J. Geom. Phys.}, 64:174--191, 2013.

\bibitem{RuegRuiz04a}
H.~Ruegg and M.~Ruiz-Altaba.
\newblock The {S}tueckelberg field.
\newblock {\em International Journal of Modern Physics A}, 19(20):3265--3347,
  2004.

\bibitem{Dira55a}
P.~A.~M. Dirac.
\newblock Gauge-invariant formulation of quantum electrodynamics.
\newblock {\em Canadian Journal of Physics}, 33(11):650--660, 1955.

\bibitem{Dira58a}
P.~A.~M. Dirac.
\newblock {\em The Principles of Quantum Mechanics}.
\newblock Oxford University Press, 4th edition, 1958.

\bibitem{LaveMcMu97a}
M.~Lavelle and D.~McMullan.
\newblock Constituent quarks from {QCD}.
\newblock {\em Physics Reports}, 279(1):1--65, 1997.

\bibitem{Ster94a}
S.~Sternberg.
\newblock {\em Group theory and physics}.
\newblock Cambridge University Press, 1994.

\bibitem{MassWall10a}
T.~Masson and J.-C. Wallet.
\newblock A remark on the spontaneous symmetry breaking mechanism in the
  standard model.
\newblock arxiv 1001.1176, 2010.

\bibitem{Fadd09a}
L.~D. Faddeev.
\newblock An alternative interpretation of the {W}einberg-{S}alam model.
\newblock In V.~Begun, L.~L. Jenkovszky, and A.~Polanski, editors, {\em
  Progress in High Energy Physics and Nuclear Safety}, NATO Science for Peace
  and Security Series -- B: Physics and Biophysics, pages 3--8. Springer, 2009.

\bibitem{CherFaddNiem08a}
M.~N. Chernodub, L.~D. Faddeev, and A.~J. Niemi.
\newblock Non-abelian supercurrents and de {S}itter ground state in electroweak
  theory.
\newblock {\em J. High Energy Phys.}, 2008(12):014, 2008.

\bibitem{Shar97a}
R.W. Sharpe.
\newblock {\em Differential Geometry, {C}artan's Generalization of {K}lein's
  {E}rlangen Program}, volume 166 of {\em Graduate Texts in Mathematics}.
\newblock Springer-Verlag, 1997.

\bibitem{GockSchu89a}
M.~G{\"o}ckeler and T.~Sch{\"u}cker.
\newblock {\em Differential Geometry, Gauge Theories, and Gravity}.
\newblock Cambridge University Press, 1989.

\bibitem{SimoChatSrin99a}
R.~Simon, S.~Chaturvedi, and V.~Srinivasan.
\newblock Congruences and canonical forms for a positive matrix: application to
  the {S}chweinler-{W}igner extremum principle.
\newblock {\em J. Math. Phys.}, 40(7):3632--3642, 1999.

\bibitem{BlytRobe02a}
T.~S. Blyth and E.~F. Robertson.
\newblock {\em Further linear algebra}.
\newblock Springer Undergraduate Mathematics Series. Springer-Verlag, London,
  2002.

\bibitem{SchwWign70a}
H.~C. Schweinler and E.~P. Wigner.
\newblock Orthogonalization methods.
\newblock {\em J. Math. Phys.}, 11:1693--1694, 1970.

\bibitem{ChatKapoSrin98a}
S.~Chaturvedi, A.~K. Kapoor, and V.~Srinivasan.
\newblock A new orthogonalization procedure with an extremal property.
\newblock {\em J. Phys. A}, 31(19):L367--L370, 1998.

\bibitem{LazzMass12a}
S.~Lazzarini and T.~Masson.
\newblock Connections on {L}ie algebroids and on derivation-based
  non-commutative geometry.
\newblock {\em J. Geom. Phys.}, 62:387--402, 2012.

\bibitem{Mack05a}
K.~Mackenzie.
\newblock {\em General Theory of {L}ie Groupoids and {L}ie Algebroids}.
\newblock Number 213 in London Mathematical Society Lecture Note Series.
  Cambridge University Press, 2005.

\bibitem{ORai86a}
L.~O'Raifeartaigh.
\newblock {\em Group Structure of Gauge Theories}.
\newblock Cambridge University Press, 1986.

\bibitem{Neem79a}
Y.~Ne'eman.
\newblock Gravity, groups, and gauges.
\newblock In A.~Held, editor, {\em General Relativity and Gravitation.},
  volume~1, page 309, New York, 1979. Plenum Press.

\bibitem{IvanSard81a}
D.~Ivanenko and G.~Sardanashvily.
\newblock Relativity and equivalence principles in a gauge theory of
  gravitation.
\newblock {\em Russ. Phys. J.}, 24:555--557, 1981.

\bibitem{Sard11a}
G.~Sardanashvily.
\newblock Classical gauge gravitation theory.
\newblock {\em Int. J. Geom. Methods Mod. Phys.}, 8(8):1869--1895, 2011.

\bibitem{WessZumi71a}
J.~Wess and B.~Zumino.
\newblock Consequences of anomalous {W}ard identities.
\newblock {\em Phys. Lett.}, B27:95, (1971).

\bibitem{Zumi84a}
B.~Zumino.
\newblock Chiral anomalies and differential geometry.
\newblock In B.~S. DeWitt and R.~Stora, editors, {\em Relativity, groups and
  topology~II}, Les Houches, Session XL, pages 1291--1322. Elsevier Science
  Publishers, 1984.

\bibitem{ManeStorZumi85a}
J.~Ma{\~n}es, R.~Stora, and B.~Zumino.
\newblock Algebraic study of chiral anomalies.
\newblock {\em Comm. Math. Phys.}, 102(1):157--174, 1985.

\bibitem{Lorc13a}
C.~Lorc{\'e}.
\newblock Geometrical approach to the proton spin decomposition.
\newblock {\em Phys. Rev. D}, 87:034031, 2013.

\end{thebibliography}

\end{document}